  \newcommand{\tikzmath}[2][]
     {\vcenter{\hbox{\begin{tikzpicture}[#1]#2
                     \end{tikzpicture}}}
     }
\def\commentout#1{}
\newcommand{\nc}[2]{\newcommand{#1}{#2}}
\nc{\A}{\mathsf{A}}
\nc{\C}{\mathbb{C}}
\nc{\R}{\mathbb{R}}
\nc{\Q}{\mathbb{Q}}
\nc{\Z}{\mathbb{Z}}
\nc{\N}{\mathbb{N}}
\nc{\cA}{\mathcal{A}}
\nc{\cB}{\mathcal{B}}
\nc{\g}{\mathfrak{g}}
\newcommand{\DeclareMyOperator}[1]{ \expandafter\DeclareMathOperator\csname #1\endcsname{#1} }
\def\Rep{\mathrm{Rep}}
\def\Vect{\mathrm{Vect}}
\newtheorem{theorem}{Theorem}%[section]
\newtheorem*{theorem*}{Theorem}
\newtheorem{lemma}[theorem]{Lemma}
\newtheorem*{tech-lemma}{Technical lemma}
\newtheorem{proposition}[theorem]{Proposition}
\newtheorem{corollary}[theorem]{Corollary}
\newtheorem{definition}[theorem]{Definition}
\newtheorem{conjecture}[theorem]{Conjecture}
\newtheorem*{conjecture*}{Conjecture}
\newtheorem*{question*}{Question}
\theoremstyle{remark}
\newtheorem*{remark*}{Remark}
\newtheorem*{example*}{Example}
\begin{document}
\title{The classification of chiral WZW models\\ by $H^4_+(BG,\Z)$\vspace{-.1cm}}
\author{Andr\'e Henriques}
\date{}
%\begin{singlespace}
\maketitle
%\end{singlespace}

\vspace{-.3cm}
\abstract{
We axiomatize the defining properties of chiral WZW models.
We show that such models are in almost bijective correspondence with pairs $(G,k)$, where $G$ is a connected Lie group
and $k \in H^4_+(BG,\Z)$ is a degree four cohomology class subject to a certain positivity condition.
We find a couple extra models which satisfy all the defining properties of chiral WZW models, but which don't come from pairs $(G,k)$ as above.
The simplest such model is the simple current extension of the affine VOA $E_8 \times E_8$ at level
$(2,2)$ by the group $\Z_2$.

\tableofcontents

% 22 pages; Removed an optimistic statement about unitarity of certain VOAs, and replaced it by a conjecture.

\section{Introduction}

\subsection{Simply connected WZW models}\label{sec: 1.1}

Associated to every simple, simply connected, compact Lie group $G$, and to every level $k\in\mathbb N:=\Z_{>0}$, are certain well-known unitary chiral conformal field theories ($\chi$CFTs), known as the \emph{chiral Wess-Zumino-Witten models}.
They where first introduced as field theories in their own right (as opposed to chiral `halves' of full CFTs) by Witten \cite{MR1151251}, %Witten: On Holomorphic Factorization of WZW and Coset Models}
following Spiegelglas \cite{MR1214323}, and %Spiegelglas: G/G - Topological Field Theories by Cosetting Gk
were subsequently studied by many authors.

These $\chi$CFTs are usually described by means of a \emph{construction}.
It is interesting to note that they can also be defined in terms of \emph{characterizing properties}.
Using unitary vertex operator algebras (VOAs) as a mathematical formalism for $\chi$CFTs\footnote{Throughout this article, all VOAs will be assumed to be one-dimensional in degree zero.}, we propose:

\begin{definition}\label{def: s-c-chir WZW}
A ``simply connected chiral WZW model'' is a unitary VOA which is rational, and generated in degree 1.
\end{definition}

Here, we call a unitary VOA \emph{rational} if it has only finitely many isomorphism classes of irreducible unitary modules.\footnote{The term `rational VOA' has many meanings in the literature (see \cite[Appendix]{MR3339173} for an overview).
The present rather crude notion will be sufficient for our purposes.}\medskip

Let us show how the above definition agrees with the more standard approach
(see \cite[\S4]{Ai+Lin:Unitary-structures-of-VOAs} for an expanded version of the argument that we are about to present).
Let $V$ be a VOA as above, and let $J^a\in V_1$, $a=1,\ldots,n$, be a basis of its degree one part.
Then the constants $f^{ab}_c$ and $k^{ab}$ which appear in the OPE
\[
J^a(z)J^b(w) = \frac{k^{ab}}{(z-w)^2}+\frac{f^{ab}_c J^c(w)}{z-w}+{\rm reg.}
\]
endow $\g:=V_1$ with the structure of a Lie algebra, and with an invariant bilinear form $\kappa:\g\times\g\to \R$.
The universal affine VOA $V_{\g,\kappa}$ maps to $V$, and our assumption that the latter is generated in degree one is equivalent to $V$ being a quotient of $V_{\g,\kappa}$.

Now, if $V$ is unitary, $\g$ acquires a positive definite hermitian form in addition to the invariant bilinear form.
Those two pieces of data combine to a real structure of compact type, and so $\g$ is the direct sum of a semi-simple Lie algebra $\g^{ss}$ and an abelian Lie algebra $\mathfrak z$ \cite[Thm.\,4.10]{Ai+Lin:Unitary-structures-of-VOAs}:
\[
\g\,=\,\g^{ss}\oplus\mathfrak z\,=\,\g_1\oplus\g_2\oplus\ldots\oplus\g_n\oplus\mathfrak z.
\]
On each simple summand $\g_i$, the bilinear form $\kappa$ is a positive multiple of the basic inner product (no constraint imposed on $\kappa|_{\mathfrak z}$)
and $V$ is isomorphic to $L_{\g,\kappa}$, the quotient of $V_{\g,\kappa}$ by its unique maximal ideal.

If $V$ is furthermore assumed to be rational, then we also have $\mathfrak z=0$.
Simply connected chiral WZW model as defined above
are therefore in bijective correspondence with pairs $(\g,k)$ consisting of a semi-simple Lie algebra  $\g=\g_1\oplus \ldots \oplus \g_n$ (with real structure of compact type),
and an $n$-tuple of positive integers $k=(k_1,\ldots, k_n)\in\mathbb N^n$.
If one wants, one can rephrase the result by saying that simply connected chiral WZW model are classified by a pair $(G,k)$ where $G$ is a compact simply connected Lie group,
and $k$ is an element in the positive part
$H^4_+(BG,\Z)\cong\N^n$
of $H^4(BG,\Z)\cong\Z^n$.

We summarize the conclusion of the above discussion in the following theorem:

\begin{theorem}
There is a natural bijective correspondence between simply connected chiral WZW model and pairs $(G,k)$, where $G$ is a compact simply connected Lie group and $k$ is an element of $H^4_+(BG,\Z)$.
\end{theorem}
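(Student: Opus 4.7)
The plan is to construct mutually inverse maps between the two sides. For the forward direction, given a simply connected chiral WZW model $V$, I would set $\g:=V_1$, read off the Lie bracket and invariant form $\kappa$ from the singular OPE of degree-one fields, use generation in degree one to get a surjection $V_{\g,\kappa}\twoheadrightarrow V$, and invoke \cite[Thm.\,4.10]{Ai+Lin:Unitary-structures-of-VOAs} to extract a compact real structure $\g=\g^{ss}\oplus\mathfrak z$ with $\kappa|_{\g_i}$ a positive multiple of the basic form on each simple summand $\g_i$ of $\g^{ss}$. This is essentially already carried out in the discussion above.

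Next I would show $V\cong L_{\g,\kappa}$, i.e.\ that the surjection factors through the unique maximal ideal of $V_{\g,\kappa}$. The unitary hermitian form on $V$ pulls back to a positive semi-definite invariant form on $V_{\g,\kappa}$; its kernel is an ideal, must agree with the Shapovalov radical by uniqueness of the invariant form up to scalar on each weight space, and the Shapovalov radical equals the maximal ideal. Rationality then rules out any abelian summand, since the Heisenberg VOA $L_{\mathfrak z,\kappa|_{\mathfrak z}}$ associated to $\mathfrak z\ne 0$ carries a continuous family of irreducibles parametrized by $\mathfrak z^*$, producing infinitely many inequivalent irreducibles for $L_{\g,\kappa}=L_{\g^{ss},\kappa}\otimes L_{\mathfrak z,\kappa|_{\mathfrak z}}$. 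What remains — a compact semi-simple $\g^{ss}$ plus one positive integer per simple summand — repackages as $(G,k)$ with $G$ the compact simply connected group integrating $\g^{ss}$ and $k\in H^4_+(BG,\Z)\cong\N^n$, under the standard identification of the basic form on $\g_i$ with a generator of $H^4(BG_i,\Z)\cong\Z$.

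For the reverse direction, given $(G,k)$ with $G=G_1\times\cdots\times G_n$ a product of compact simply connected simple factors, I would form
\[
L_{\g_1,k_1}\otimes\cdots\otimes L_{\g_n,k_n}.
\]
Each factor is a unitary rational VOA generated in degree one by the classical theory of integrable highest weight representations of affine Lie algebras at positive integer level, and these three properties are preserved under tensor product. By construction this assignment inverts the forward one.

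The main obstacle is the step identifying $V$ with the simple quotient $L_{\g,\kappa}$: one must show that unitarity forces the defining ideal to coincide with the full Shapovalov radical rather than some proper subideal, which requires controlling the interaction between the unitary structure on $V$ and the invariant form on $V_{\g,\kappa}$. The other nontrivial ingredients — the compact real structure on $\g$, rationality and unitarity of $L_{\g,k}$ at positive integer level, and the isomorphism $H^4(BG_i,\Z)\cong\Z$ dual to the basic inner product — are all standard.
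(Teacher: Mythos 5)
Your proposal is correct and follows essentially the same route as the paper: read off $(\g,\kappa)$ from the degree-one OPE, invoke \cite[Thm.\,4.10]{Ai+Lin:Unitary-structures-of-VOAs} for the compact real structure and the identification $V\cong L_{\g,\kappa}$, use rationality to kill the abelian summand, and repackage the levels as $H^4_+(BG,\Z)\cong\N^n$. The only difference is that you unpack the Shapovalov-radical argument for $V\cong L_{\g,\kappa}$, whereas the paper delegates that step to the same cited theorem.
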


The goal of this paper is to generalize the above result to the case of non-simply connected groups.

\subsection{Non-simply connected WZW models}\label{intro2}

Wess-Zumino-Witten models for non-simply connected Lie groups have been considered by many authors, and so have their chiral halves \cite{MR946997}\cite[\S2]{MR992362}\cite[\S6]{MR1409292}.
In the mathematical literature, this class of models was studied in \cite{MR1408523}\cite{MR1822111}.
Their representations were classified, and their fusion rules computed.

The cohomology group $H^4(BG,\Z)$ is briefly mentioned in \cite[Appendix]{MR1151251}, and its role in setting up certain aspects of the chiral WZW models was explained in \cite[\S5]{MR1291698}.
But the question of \emph{classification} of chiral WZW models has, to our knowledge, never been formulated in the way we do it here.

In order to formulate our question, we first need a definition of chiral WZW models, analogous to the one given in the previous section.
In Definition~\ref{def: g-chir WZW}, we present a preliminary notion of a not necessarily simply connected chiral WZW model
(in Section~\ref{sec: LG}, we will justify it in an informal way, from the point of view of geometric quantization).
Later, we will fine-tune our definition (Definition~\ref{def: chir WZW}) so as to make our main theorem (Theorem \ref{thm: Classific WZW}) be true.\medskip

An extension of VOAs $W\subset V$ is a called a \emph{simple current extension} if $V$ decomposes as a direct sum $V=\bigoplus_{\lambda\in \pi}M_\lambda$ of invertible $W$-modules
indexed by a (possibly infinite) abelian group $\pi\subset \mathrm{Rep}(W)$, and the VOA structure on $V$ is compatible with the $\pi$-grading.

\begin{definition}\label{def: g-chir WZW}
A ``general chiral WZW model'' is a unitary rational VOA $V$ which is a simple current extension of the sub-VOA generated by its degree one part.
\end{definition}

Given a compact Lie group $G$, the positive part of $H^4(BG,\Z)$ is the subset of elements whose image under the Chern--Weil homomorphism $$H^4(BG,\Z)\to \mathrm{Sym}^2(\g^*)$$ are positive definite bilinear forms
on the Lie algebra of $G$.
We denote it by
\[
H^4_+(BG,\Z)\subset H^4(BG,\Z).
\]
We refer the reader to the end of Section \ref{sec : H^4} for a more precise description of $H^4_+(BG,\Z)$.
The result which we had hoped to be true is the following:\medskip

\noindent
\emph{``There is a natural bijective correspondence between general chiral WZW models and pairs $(G,k)$ where $G$ is a compact connected Lie group (not necessarily simply connected),
and $k$ is an element of $H^4_+(BG,\Z)$.''}\medskip

Unfortunately, with Definition \ref{def: g-chir WZW},
the natural map
\begin{equation}\label{eq: not an iso :-(}
\{(G,k)\,|\,
G \text{ connected, } k\in H^4_+(BG,\Z)\}
\,\,\to\,\,\,
\{\text{General chiral WZW models}\}
\end{equation}
is neither injective nor surjective.
The simplest example that illustrates the lack of injectivity is $SU(2)$ level $1$, versus $U(1)$ level $1$:\footnote{
A better name is to call this ``$U(1)$ level $2$'', and to reserve ``$U(1)$ level $1$'' for the free fermion super-VOA.
We called it ``$U(1)$ level $1$'' because it corresponds to the generator of $H^4(BU(1),\mathbb Z)$.
}
those two models yield isomorphic VOAs \cite{MR0626704}.
In order to restore injectivity, we need to add the Lie algebra of $G$ as part of our data, which leads us to the following modified definition (still preliminary):\medskip

\addtocounter{theorem}{-1}
\begin{definition}
A\put(-16.5,0){$'$}\;\! ``general chiral WZW model'' is a pair $(V,\g)$ consisting of a unitary rational VOA $V$ and a Lie algebra $\g\subset V_1$, such that
$V$ is a simple current extension of the sub-VOA generated by $\g$.
\end{definition}

%At this point, the reader might wonder why we have formulated this in terms of pairs $(V,\g)$, and 
%we didn't just say that a general chiral WZW model is a VOA $V$ for which there \emph{exists} a Lie subalgebra $\g\subset V_1$ with the given properties.
%such that $V$ is a simple current extension of the thing generated by $\g$.

Even after this modification,
we still have the problem that the map (\ref{eq: not an iso :-(}) is not surjective:
the pairs $(G,k)$ only parametrize a proper subset of the general chiral WZW models.
The problem can be entirely blamed on the existence of the affine VOA $L_{E_8,2}$ 
associated to the Lie group $E_8$ at level $2$.
For all simply connected chiral WZW models $L_{\g,k}$
\emph{except the one associated to $E_8$ at level $2$},
the set of invertible $L_{\g,k}$-modules is in bijection with the centre of $G$ \cite{MR1096120}.
The VOA $L_{E_8,2}$, on the other side, admits a non-trivial invertible module, despite the fact that $Z(E_8)=\{e\}$.

In writing this paper we were faced with the following unpleasant choice.
We could either stick to a cleaner notion of chiral WZW model, such as the ones presented in Definitions \ref{def: g-chir WZW} or $\ref{def: g-chir WZW}'$,
at the cost of making our main theorem inelegant, or add a weird-looking clause to our definition so as to ensure that our main theorem looks good. We chose the second option: 
our working definition of chiral WZW model (Definition \ref{def: chir WZW})
explicitly excludes the counterexamples that would otherwise be there.
As a consequence, our main theorem (Theorem~\ref{thm: Classific WZW}) does not have to mention these counterexamples.

Our choice was also guided by our conviction that the VOAs built using the non-trivial invertible module of $L_{E_8,2}$ should not be called ``chiral WZW models''.

Let us write $M_{E_8,2}$ for the non-trivial invertible $L_{E_8,2}$-module.
We modify Definition~$\ref{def: g-chir WZW}'$
so as to explicitly rule out the possibility that $M_{E_8,2}$ gets used in the simple current extension.
Let us call an extension $W\subset V$ \emph{$E_{8,2}$-contaminated} if one can write $W$ as a tensor product $W=W'\otimes L_{E_8,2}$
and find a sub-$W$-module $M\subset V$ of the form $M=M'\otimes M_{E_8,2}$ for some $W'$-module $M'$:

\begin{definition}\label{def: chir WZW}
A ``chiral WZW model'' is a pair $(V,\g)$ consisting of a rational unitary VOA $V$ and a Lie algebra $\g\subset V_1$, such that $V$ is a simple current extension of the sub-VOA generated by $\g$, and the extension is not $E_{8,2}$-contaminated.
\end{definition}

With this definition in place, we can state our main theorem:

\begin{theorem}\label{thm: Classific WZW}
There is a natural bijective correspondence between chiral WZW models and pairs $(G,k)$ where $G$ is a compact connected (not necessarily simply connected) Lie group and $k$ is an element of $H^4_+(BG,\Z)$.
\end{theorem}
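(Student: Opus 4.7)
My plan is to reduce a chiral WZW model to a simply connected one via the sub-VOA generated by $\g$, and then match the simple current extension data with pairs $(G,k)$ via the cohomological classification to be developed in Section~\ref{sec : H^4}.

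Given $(V,\g)$, let $W\subset V$ be the sub-VOA generated by $\g$. Decompose $\g=\g^{ss}\oplus\mathfrak z$ into semisimple and abelian parts of compact type; then $W\cong L_{\g^{ss},k^{ss}}\otimes H(\mathfrak z)$ splits as a tensor product of an affine VOA attached (by Section~\ref{sec: 1.1}) to a simply connected compact group $\tilde G_{ss}$ and a class $k^{ss}\in H^4_+(B\tilde G_{ss},\Z)$, with the Heisenberg VOA $H(\mathfrak z)$ determined by the restriction $\kappa|_{\mathfrak z}$. The decomposition $V=\bigoplus_{\lambda\in\pi}M_\lambda$ realises $\pi$ as an abelian group of invertible $W$-modules. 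By \cite{MR1096120}, invertible $L_{\g^{ss},k^{ss}}$-modules form $Z(\tilde G_{ss})$ away from the $E_{8,2}$ exception, and invertible $H(\mathfrak z)$-modules are parametrised by $\mathfrak z^\ast$; the non-contamination clause of Definition~\ref{def: chir WZW} kills the spurious $E_{8,2}$ current, so $\pi$ embeds into $Z(\tilde G_{ss})\times\mathfrak z^\ast$, and rationality of $V$ forces the $\mathfrak z^\ast$-projection of $\pi$ to span a full-rank lattice. From $\pi$ one assembles a compact connected Lie group $G$ with $\mathrm{Lie}(G)=\g$.

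The heart of the proof, and the main obstacle, is to match the tuple $(\tilde G_{ss},k^{ss},\kappa|_{\mathfrak z},\pi)$ with a pair $(G,k)$ where $k\in H^4_+(BG,\Z)$. On the VOA side, the $\pi$-graded simple current extension of $W$ exists as a unitary VOA exactly when the conformal weights $h(M_\lambda)$ vanish modulo $\Z$ on $\pi$ and an associativity $3$-cocycle trivialises; on the cohomology side, a class $k\in H^4(BG,\Z)$ pulls back to $(k^{ss},\kappa|_{\mathfrak z})$ under precisely the same conditions, as seen via transgression to the loop group $LG$ or through the Chern--Simons interpretation. Establishing this identification of obstruction classes is the substance of Section~\ref{sec : H^4}; once granted, the correspondence becomes a bijection, and positivity on the two sides is preserved.

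For the reverse construction, starting from $(G,k)$ I would let $\g=\mathrm{Lie}(G)$, take the simply connected cover $\tilde G$, set $\pi=\ker(\tilde G\to G)$, pull $k$ back to $(k^{ss},\kappa|_{\mathfrak z})$, and form the simple current extension of $L_{\g^{ss},k^{ss}}\otimes H(\mathfrak z)$ by $\pi$ using the cocycle provided by $k$. Rationality follows from rationality of the semisimple affine factor, rationality of the Heisenberg extension by the lattice part of $\pi$, and finiteness of the remaining semisimple part of $\pi$; unitarity comes from positivity of $k$; non-contamination is automatic since $\pi$ misses the $E_{8,2}$ current by construction. The two constructions are readily checked to be mutually inverse and natural in $G$, completing the theorem.
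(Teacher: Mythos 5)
Your overall strategy matches the paper's: reduce to the sub-VOA $W$ generated by $\g$, identify $W\cong L_{\g^{ss},k}\otimes M_{\mathfrak z}$, read off $\pi$ from the simple current decomposition, and reassemble $(G,k)$. But the step you yourself flag as ``the heart of the proof'' is left as a gesture, and it is precisely where the real work lies. The paper's bridge between the two sides is the explicit minimal-energy formula for invertible modules, $h_\lambda=\tfrac12\langle\lambda,\lambda\rangle_{k\oplus\mathfrak z}$ for $\lambda\in\A^\times\times\mathfrak z_\R$ (Proposition~\ref{min en computation} and Corollary~\ref{Cor: min energy}, proved via the alcove geometry of Lemma~\ref{lem: isom(A)}). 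Combined with the identification $Z(\tilde G)\cong\A^\times\times\mathfrak z_\R$, this converts the VOA-side condition ``$h_\lambda\in\Z$ for all $\lambda\in\pi$'' into the condition ``$\tfrac12\langle X,X\rangle\in\Z$ for all $X$ with $\exp(2\pi iX)\in\pi$'', which by Theorem~\ref{prop: H^4(BG)} is exactly membership in $H^4(BG,\Z)$. (Even this translation needs care: integrality on $\pi\subset\A^\times\times\mathfrak z_\R$ must be promoted to integrality on the full preimage lattice, which uses Lemma~\ref{coroot coweight}.) Your appeal to ``transgression to the loop group or the Chern--Simons interpretation'' does not supply this computation, and without it the matching of the VOA data with $H^4_+(BG,\Z)$ --- the bijection $A_4\cong A_5$ in the paper's chain --- has no content.

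Two further points. First, you posit an additional obstruction (``an associativity $3$-cocycle trivialises'') for the existence of the simple current extension. The paper's Proposition~\ref{prop/conj} shows that in the unitary setting there is no such extra obstruction: once the conformal weights are integral, the braided subcategory spanned by $\pi$ is equivalent to $\mathrm{Vec}[\pi]$ with trivial braiding, so the extension exists and is \emph{unique}. If an extra cocycle condition genuinely intervened, the correspondence would be a torsor rather than a bijection, so this is not a harmless imprecision --- you would need to prove the obstruction always vanishes. Second, ``unitarity comes from positivity of $k$'' is not available: unitarity of the simple current extension is exactly the content of Conjecture~\ref{conj: unitary simple current ext}, on which the paper's proof is explicitly conditional, and your argument should acknowledge this dependence rather than assert unitarity outright.
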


Given a Lie group $G$ as above with Lie algebra $\mathfrak g$ and a positive level $k\in H^4_+(BG,\mathbb Z)$, 
the corresponding VOA is given by
\begin{equation}\label{my VOA}
V_{G,k}
:=\pi\ltimes (L_{\g^{ss},k}\otimes M_{\mathfrak z}).
\end{equation} 
It is a simple current extension of the tensor product $L_{\g^{ss},k}\otimes M_{\mathfrak z}$ of an affine VOA (associated to the semi-simple part of $\g$) and a Heisenberg VOA (associated to the center of $\mathfrak g$) by the abelian group $\pi:=\pi_1(G)$.

\begin{remark*}
The proof of Theorem \ref{thm: Classific WZW} is conditional on a conjecture (Conjecture~\ref{conj: unitary simple current ext}), according to which 
simple current extensions of unitary VOAs by unitary modules are again unitary.
The construction of $V_{G,k}$ does not rely on Conjecture~\ref{conj: unitary simple current ext}, but its unitarity does.
\end{remark*}

The interest of our theorem is that the group $H^4_+(BG,\Z)$ is never mentioned in Definition~\ref{def: chir WZW}. It just comes out from the proof.

Note that our result is in stark contrast with the classification of full WZW models by (positive) elements of $H^3(G,\Z)$ \cite{MR1048699}
(as a side remark, we note that, unlike Theorem~\ref{thm: Classific WZW}, the latter is only valid when $G$ is semi-simple; sigma-models with target a torus are classified by translation-invariant Riemannian metrics, and those are not quantized).

In the last section of our paper, we describe the chiral WZW models in the formalism of chiral conformal nets.
We do not have an analog of Theorem~\ref{thm: Classific WZW} in that context.
Chiral conformal nets are conjecturally equivalent to unitary VOAs~\cite{arXiv:1503.01260}
(both are mathematical formalisations of the concept of a unitary $\chi$CFT), and
it is tempting to try to formulate a version of Theorem~\ref{thm: Classific WZW} for chiral conformal nets.
However, unlike for VOAs, the property of being generated in degree $1$ is not easily expressible in terms of conformal nets.
\bigskip\medskip

\noindent
{\large \bf Acknowledgments}\medskip\\
I am grateful to Marcel Bischoff, Christoph Schweigert and Constantin Teleman for useful comments, and to Scott Carnahan and Sebastiano Carpi for extensive feedback on an earlier version of this manuscript.

This research was supported by the Leverhulme trust, the EPSRC grant ``Quantum mathematics and computation'', and the ERC grant No 674978 under the European Union's Horizon 2020 research innovation programme.
\section{Geometric quantization}\label{sec: LG}

In this section, we provide an informal construction of the chiral WZW models from the point of view of loop groups and geometric quantization.
We then use this to justify our Definition \ref{def: g-chir WZW}.\medskip

Let $G$ be a compact connected Lie group, and let $k\in H^4_+(BG,\Z)$ be a level.
The construction goes as follows:
\begin{enumerate}
\item To begin with, for every circle $S$ (here, a `circle' is a smooth oriented manifold that is diffeomorphic to $S^1$) one can use the level $k$ to get a central extension\footnote{The central extension $\widetilde{LG}$ depends on $k$, even though this is not reflected in our notation.}
\[
1\,\to\, U(1)\,\to\, \widetilde{LG} \,\to\, LG \,\to\, 1
\]
of the loop group $LG:=\mathit{Map}(S,G)$.
See \cite[Chapt 4]{MR900587}
\cite{MR887993}
for the construction in the case of simply connected gauge groups, and 
\cite[Thm 3.7]{MR2610397} \cite[Thm A]{Waldorf-WIP}
for the general case.
%(along with \cite[Prop 5.1.3]{MR3055987} for the independence of the various choices made).
This group satisfies the locality axiom of quantum field theory, in the sense that loops supported in disjoint intervals lift to commuting elements of $\widetilde{LG}$ \cite[\S3.3]{Waldorf-WIP} (\cite[Lem~3.1]{MR1231644} in the simply connected case).

\item Given the further data of a conformal disc $\mathbb D$ bounded by $S$, one constructs a certain representation $H_0=H_0(\mathbb D)$ of $\widetilde{LG}$, called the vacuum sector associated to $\mathbb D$.
It is the geometric quantisation of the Hamiltonian $LG$-space
\[
\begin{split}
LG/G = \{\text{moduli space of flat $G$-bundles over $\mathbb D$ trivialised over $\partial \mathbb D$}\}\\
= \{\text{holomorphic $G_\C$-bundles with a smooth trivialisation over $\partial \mathbb D$}\}
\end{split}
\]
with respect to the pre-quantum line bundle induced by the central extension.
This construction is described in \cite[Chapt 11]{MR900587} in the case of simple simply connected gauge groups, and
in \cite{MR2925299} in the case of tori.
It is in this step that it is important that $k$ be a positive element of $H^4(BG,\Z)$,
as this ensures that the K\"ahler structure on $LG/G$ is positive definite
and the pre-quantum line bundle has enough sections.

\item We now take $\mathbb D$ to be the standard unit disc. The circle group $S^1$ acts on $\mathbb D$ by rotations, and there is a corresponding action on $H_0$. The vertex algebra $V_{G,k}$ associated to $G$ and $k$ is the algebraic direct sum of the eigenspaces of this rotation action.
\end{enumerate}

\noindent
Unfortunately, it is not easy to describe the VOA structure on $V_{G,k}$ from the above perspective.
The construction also doesn't provide the projective action of $\mathrm{Diff}(S^1)$ on~$H_0$.
We will not try to address these shortcomings.
Instead, we just take the above heuristic construction as a road map for understanding the meaning of being a chiral WZW model.
We then use this to justify our Definition \ref{def: g-chir WZW}, at an informal level.
\bigskip

When $G$ is simply connected, then the above geometric quantization procedure yields the affine VOA $L_{\g,k}$. This is the construction in \cite[Chapt 11]{MR900587}.
(When $\g$ is not simple but only semi-simple,
we denote by $L_{\g,k}$ the tensor product of the affine VOAs associated to the simple summands of $\g$, with the understanding that $k$ is now an $n$-tuple of numbers)

Let us now take $G$ to be a Lie group that is not simply connected. If we assume that the fundamental group of $G$ is finite, then the universal cover $\tilde G$ is compact and we can proceed as follows.
Let $\tilde k\in H^4(B\tilde G,\Z)$ be the restriction of $k$.
The moduli space of flat $G$-bundles over $\mathbb D$ trivialised over $\partial \mathbb D$
decomposes as a disjoint union of $\pi_1(G)$ many copies of the corresponding moduli space for $\tilde G$.
The vacuum sector of $LG$ decomposes accordingly as $\pi_1(G)$ many representations of $L\tilde G$.
These representations are all invertible, and so we get that $V_{G,k}$ is a simple current extension of 
$L_{\g,k}$ by the group $\pi_1(G)$.

Remarkably, the above story goes through essentially unchanged when $\pi_1(G)$ is infinite (i.e., when $\g$ is reductive but not semi-simple).
In that case, we first consider finite covers $\tilde G\to G$ of $G$, so that $\tilde G$ is again a compact Lie group.
The story then proceeds as before:
the moduli space of flat $G$-bundles over $\mathbb D$ trivialised over $\partial \mathbb D$
is a disjoint union of copies of the corresponding moduli space for $\tilde G$, this time indexed over $Z:=\ker(\tilde G\to G)$.
The vacuum sector of $LG$ decomposes accordingly as a $Z$ many representations of $L\tilde G$,
and $V_{G,k}$ is a simple current extension of $V_{\tilde G,\tilde k}$ by $Z$.
Thus, for every finite quotient $Z$ of the fundamental group,
we get a direct sum decomposition of $V_{G,k}$ into $Z$ many summands, exhibiting it as a simple current extension of some smaller vertex algebra. 
Taking the common refinement of all these decompositions as $Z$ ranges over the finite quotients of $\pi_1(G)$, we get a direct sum decompositions of $V_{G,k}$ into $\pi_1(G)$ many summands.
The direct summand that corresponds to the trivial element in $\pi_1(G)$ is the tensor product of the affine VOA $L_{\g^{ss},\tilde k}$ with the Heisenberg VOA $M_{\mathfrak z}$, where
\[
\g=\g^{ss}\oplus \mathfrak z,
\]
is the decomposition of $\g$ into its semi-simple part $\g^{ss}$ and its center $\mathfrak z=\mathfrak z(\g)$.
At the end of the day, we get that $V_{G,k}$ is a simple current extension of 
$L_{\g^{ss},\tilde k}\otimes M_{\mathfrak z}$ by the infinite abelian group $\pi_1(G)$.

Now, the distinguishing feature of the vertex algebras $L_{\g^{ss},\tilde k}\otimes M_{\mathfrak z}$ is that they are unitary, and generated in degree one:
any unitary VOA that is generated in degree one is isomorphic to $L_{\g,k}\otimes M_{\mathfrak z}$ for some semi-simple Lie algebra $\g$, and a complex vector space $\mathfrak z$ equipped with a real form $\mathfrak z_\R\subset \mathfrak z$ and an inner product on $\mathfrak z_\R$ \cite[Thm.\,4.10]{Ai+Lin:Unitary-structures-of-VOAs}.
It is those features that we have distilled into Definition~\ref{def: g-chir WZW}:\medskip

\emph{
A general chiral WZW model is a unitary rational VOA, which is a simple current extension of a VOA that is generated in degree one.
}\medskip

The additional requirement that the VOA should be rational corresponds to $G$ being compact.
Finally, as explained in the introduction, the sub-VOA generated in degree one needs to be part of the data, as otherwise one cannot always recover the Lie algebra $\g$ from the VOA.

\section{$H^4(BG)$ for connected Lie groups}\label{sec : H^4}

Before embarking in the proof of Theorem \ref{thm: Classific WZW}, it will be important to acquire a good understanding of the group $H^4(BG,\Z)$, in which the levels live.

\begin{theorem}\label{prop: H^4(BG)}
Let $G$ be a connected Lie group and let $\g$ be its complexified Lie algebra.
Then there is a natural bijection between $H^4(BG,\Z)$ and the group of $G$-invariant symmetric bilinear forms $\langle\,\,,\,\rangle:\g\otimes \g\to \C$ that obey
\[
\tfrac12\langle X,X\rangle\in \Z
\quad\,\, \text{for every}
\quad\,\, 
X\in\g\,\;\text{s.t.}\; \exp(2\pi i X)=e.
\]
Letting $T$ be a maximal torus of $G$ and $W$ the Weyl group, we also have
$H^4(BG,\Z)=H^4(BT,Z)^W$.
\end{theorem}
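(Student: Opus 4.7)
The plan is to reduce the claim, in two stages, to a polynomial computation on a torus. First, for any connected Lie group $G$, a maximal compact subgroup $K\subset G$ is a deformation retract (since $G/K$ is contractible), so $BG\simeq BK$; moreover the complexified Lie algebra, the subset $\{X\in\g\mid\exp(2\pi i X)=e\}$, and the condition of invariance under the adjoint action all depend only on $K$. I may therefore assume throughout that $G$ is compact and connected.

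Next I would establish the auxiliary identification $H^4(BG,\Z)=H^4(BT,\Z)^W$. Using the fibration $G/T\to BT\to BG$ and the Schubert decomposition of $G/T$, which realises it as a CW-complex with cells in even real dimensions only, the groups $H^*(G/T,\Z)$ are free and concentrated in even degrees. The Serre spectral sequence therefore collapses at $E_2$ for parity reasons, so the edge map $H^4(BG,\Z)\hookrightarrow H^4(BT,\Z)$ is injective with image contained in the $W$-invariants; that the image equals the invariants in degree $\leq 4$ is Borel's classical result, using additionally that $H^4(BG,\Z)$ is torsion-free in this range.

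For a torus $T$ we have $BT\simeq (\C P^\infty)^{\mathrm{rk}\,T}$, so $H^*(BT,\Z)=\mathrm{Sym}^\bullet_\Z(\Lambda^*)$ with $\Lambda:=\ker(\exp\colon\mathfrak{t}_\R\to T)\cong\pi_1(T)$ and $\Lambda^*=H^2(BT,\Z)$. In particular $H^4(BT,\Z)=\mathrm{Sym}^2(\Lambda^*)$ identifies with the group of integer quadratic forms $Q$ on $\Lambda$. Polarisation $B(X,Y):=Q(X+Y)-Q(X)-Q(Y)$ sets up a bijection between such $Q$ and symmetric $\C$-bilinear forms $B$ on $\mathfrak{t}_\C$ that take integer values on $\Lambda$ and whose diagonal $B(X,X)$ is even on $\Lambda$, the second condition rephrasing as $\tfrac12 B(X,X)=Q(X)\in\Z$. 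Under the identification $\Lambda=\{X\in\mathfrak{t}_\C\mid\exp(2\pi i X)=e\}$, this is exactly the integrality condition restricted to $\mathfrak{t}_\C$.

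Finally I would extend from $\mathfrak{t}_\C$ back to $\g_\C$. The Chevalley restriction theorem for $\mathrm{Sym}^2$ says that restriction is a bijection between $G$-invariant symmetric bilinear forms on $\g_\C$ and $W$-invariant such forms on $\mathfrak{t}_\C$. To transport the integrality condition, I would use that any $X\in\g_\C$ with $\exp(2\pi i X)=e$ is automatically semisimple (its Jordan nilpotent part $X_n$ satisfies $\exp(2\pi i X_n)=e$, forcing $X_n=0$) and hence $G_\C$-conjugate into $\mathfrak{t}_\C$, so the condition on $\Lambda$ propagates to the full statement. I expect the main obstacle to be step two, the integer-coefficient statement $H^4(BG,\Z)=H^4(BT,\Z)^W$: rationally this is immediate from Borel, but the integral version requires knowing that $H^4(BG,\Z)$ is torsion-free and that the edge map surjects onto the full $W$-invariant sublattice rather than a finite-index subgroup.
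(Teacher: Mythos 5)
Your route is genuinely different from the paper's (which passes to a finite cover $\tilde G = G^{ss}\times T_0$ and compares the Serre spectral sequences of the fibrations $B\tilde G\to BG\to K(Z,2)$ and $B\tilde T\to BT\to K(Z,2)$, quoting the simply connected semisimple case from the literature), but your step two has a real gap, and it sits exactly where the content of the theorem is. The Serre spectral sequence of $G/T\to BT\to BG$ does \emph{not} collapse at $E_2$ for parity reasons: the $E_2$ page is $H^p(BG,H^q(G/T,\Z))$, and $H^{\mathrm{odd}}(BG,\Z)$ is nonzero in general (it contains torsion, e.g.\ $H^3(BSO(3),\Z)=\Z/2$), so the page is not concentrated in even total degree. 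In fact the odd differentials $d_3,d_5,\dots$ are forced to be nonzero in general, since the abutment $H^*(BT,\Z)$ is even and torsion-free while the base is not; so neither the injectivity of $H^4(BG,\Z)\to H^4(BT,\Z)$ nor the identification of its image follows from your argument. Borel's theorem gives the statement rationally; the integral assertion $H^4(BG,\Z)=H^4(BT,\Z)^W$ --- torsion-freeness plus surjectivity onto the full $W$-invariant lattice rather than a finite-index subgroup --- is precisely what you flag as ``the main obstacle'' and leave unproved, and it is precisely what the paper establishes via the two-spectral-sequence comparison, which yields $H^4(BG)=H^4(B\tilde G)\cap H^4(BT)$ inside $H^4(B\tilde T)$.

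Two smaller points. Your opening reduction to a maximal compact subgroup is not innocent: $BG\simeq BK$ is fine, but the complexified Lie algebra and the set $\{X\mid \exp(2\pi iX)=e\}$ do change under this replacement (for $G=\R$ the right-hand side of the theorem would be all of $\C$ while $H^4(B\R,\Z)=0$), so the theorem should simply be read with $G$ compact, as the paper's proof implicitly does. On the other hand, your final step --- semisimplicity of any $X$ with $\exp(2\pi iX)=e$, hence conjugacy into $\mathfrak t$, combined with Chevalley restriction in degree two --- is correct and plays the same role as the Zariski-density argument at the end of the paper's proof; likewise your identification of $H^4(BT,\Z)=\mathrm{Sym}^2(\Lambda^*)$ with even forms on $\Lambda$ via polarisation agrees with the paper's.
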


We emphasize that the above result is very special to $H^4$.
For a general compact Lie group, the higher integral cohomology of $BG$ is full of torsion and admits no easy description, even when $G$ is simply connected \cite{MR593242}.

\begin{proof}[Proof of Theorem \ref{prop: H^4(BG)}:]
Let
\begin{equation}\label{eq: finite cover of G}
\tilde G\,=\,G^{ss}\times T_0\,\to\, G
\end{equation}
be a finite cover of $G$ by the product of a semi-simple simply connected Lie group $G^{ss}$ and a torus $T_0$.
Let $\tilde T$ be the preimage in $\tilde G$ of the maximal torus $T\subset G$.
Let $\mathfrak t$ be the complexified Lie algebra of $T$ (and of $\tilde T$).
Let $\Lambda=\pi_1(\tilde T)$ be the coroot lattice of $\tilde T$, identified with the set of elements $X\in \mathfrak t$ such that $\exp(2\pi i X)$ is trivial in $\tilde T$.
Finally, let $\Lambda^*:=\hom(\Lambda,\Z)$ be the weight lattice of $\tilde T$.

The cohomology ring of $B\tilde T$ is the symmetric algebra on $\Lambda^*$, and the map
\[
\mathrm{Sym}^2(\Lambda^*)\to \hom (\Lambda\otimes \Lambda,\Z):\,\,
\omega_1\!\cdot\!\!\;\!\; \omega_2\mapsto \big(\lambda_1\otimes\lambda_2\mapsto \omega_1(\lambda_1)\omega_2(\lambda_2)+\omega_1(\lambda_2)\omega_2(\lambda_1)\big)
\]
identifies $H^4 (B\tilde T, \Z)=\mathrm{Sym}^2 (\Lambda^*)$ with the group of even symmetric bilinear forms on $\Lambda$,
equivalently, symmetric bilinear forms $\langle\,\,,\,\rangle:\Lambda\otimes \Lambda\to \Z$ such that 
$\tfrac12\langle X,X\rangle\in \Z$ for every $X\in \Lambda$.
By \cite[(1.7.4)]{MR1441006}, the image of $H^4 (B\tilde G, \Z)$ in $H^4 (B\tilde T, \Z)$ is exactly its 
Weyl invariant part.
Similarly, $H^2 (B\tilde G, \Z)$ is the Weyl invariant part of $H^2 (B\tilde T, \Z)=\Lambda^*$, which is just the weight lattice of $T_0$.
%in \eqref{eq: finite cover of G}.

Let $Z=\ker(\tilde G\to G)$ be the kernel of the projection map and 
let $K(Z,2)$ be the corresponding second Eilenberg-MacLane space.
The Serre spectral sequence for the fibration $B\tilde G\to BG \to K(Z,2)$ looks as follows:
\begin{equation}\label{eq: SS1}
\tikzmath{
\node at (0,0) {$\Z$};
\node at (1.5,0) {$\scriptstyle0$};
\node at (3,0) {$\scriptstyle0$};
\node (A2) at (5,0) {$Z^*$};
\node at (6.5,0) {$\scriptstyle0$};
\node[scale=.9] (B3) at (8.8,0) {$H^5(K(Z,2))$};
\node at (0,1) {$\scriptstyle0$};
\node at (1.5,1) {$\scriptstyle0$};
\node at (3,1) {$\scriptstyle0$};
\node at (5,1) {$\scriptstyle0$};
\node[scale=.9] (A1) at (0,2) {$H^2(BT_0)$};
\node at (1.5,2) {$\scriptstyle0$};
\node at (3,2) {$\scriptstyle0$};
\node[scale=.9] (B2) at (5,2) {$H^2(BT_0)\otimes Z^*$};
\node at (0,3) {$\scriptstyle0$};
\node at (1.5,3) {$\scriptstyle0$};
\node at (3,3) {$\scriptstyle0$};
\node[scale=.9] (B1) at (0,4) {$H^4(B\tilde T)^W$};
\draw[->] (A1) -- (A2);
\draw[->] (B1) -- (B2);
\draw[->] (B1) -- (B3);
}
\end{equation}
\[
H^4(BG)=\ker\big(d_5:\ker(d_3:H^4B\tilde G\to H^2BT_0\otimes Z^*)\to H^5K(Z,2)\big).\smallskip
\]
Here, $Z^*=H^3(K(Z,2),\Z)=\hom(Z,U(1))$ and $H^5(K(Z,2),\Z)=H^4(K(Z,2),U(1))$ is the group of $U(1)$-valued quadratic forms on $Z$ \cite[Thm 26.1]{MR0065162}.

We wish to compare the above spectral sequence 
with the one associated to the fibration $B\tilde T\to BT \to K(Z,2)$:

\begin{equation}\label{eq: SS2}
\tikzmath{
\node at (0,0) {$\Z$};
\node at (1.5,0) {$\scriptstyle0$};
\node at (3,0) {$\scriptstyle0$};
\node (A2) at (5,0) {$Z^*$};
\node at (6.5,0) {$\scriptstyle0$};
\node[scale=.9] (B3) at (8.8,0) {$H^5(K(Z,2))$};
\node at (0,1) {$\scriptstyle0$};
\node at (1.5,1) {$\scriptstyle0$};
\node at (3,1) {$\scriptstyle0$};
\node at (5,1) {$\scriptstyle0$};
\node[scale=.9] (A1) at (0,2) {$H^2(B\tilde T)$};
\node at (1.5,2) {$\scriptstyle0$};
\node at (3,2) {$\scriptstyle0$};
\node[scale=.9] (B2) at (5,2) {$H^2(B\tilde T)\otimes Z^*$};
\node at (0,3) {$\scriptstyle0$};
\node at (1.5,3) {$\scriptstyle0$};
\node at (3,3) {$\scriptstyle0$};
\node[scale=.9] (B1) at (0,4) {$H^4(B\tilde T)$};
\draw[->] (A1) -- (A2);
\draw[->] (B1) -- (B2);
\draw[->] (B1) -- (B3);
}
\end{equation}
\[
H^4(BT)=\ker\big(d_5:\ker(d_3:H^4B\tilde T\to H^2B\tilde T\otimes Z^*)\to H^5K(Z,2)\big).\smallskip
\]
The map from the $E_2$ page of the spectral sequence \eqref{eq: SS1} to the $E_2$ page of the spectral sequence \eqref{eq: SS2} is injective in all the bidegrees drawn. 
It follows that
\[
H^4(BG)=H^4(B\tilde G)\cap H^4(BT)=H^4(BT)^W,
\]
where the intersection takes place inside $H^4(B\tilde T)$.

To finish the argument, we recall that 
the set of vectors $X\in \g$ which are conjugate to elements of $\mathfrak t$ is Zariski dense.
So there is a bijection between $G$-invariant (equivalently $\tilde G$-invariant) symmetric bilinear forms on $\g$ and $W$-invariant symmetric bilinear forms on $\mathfrak t$:
\[
\begin{split}
H^4(BG)=H^4(BT)^W&=\,
\raisebox{-2.64mm}{\parbox{9.5cm}{
$W$-invariant symmetric bilinear forms $\langle\,\,,\,\rangle:\mathfrak t\otimes \mathfrak t\to \C$ s.t. $\tfrac12\langle X,X\rangle\in \Z$ for every $X\in \mathfrak t$ with $\exp(2\pi i X)=e$}}\\
&=\,
\raisebox{-2.64mm}{\parbox{9.5cm}{
$G$-invariant symmetric bilinear forms $\langle\,\,,\,\rangle:\g\otimes \g\to \C$ s.t. $\tfrac12\langle X,X\rangle\in \Z$ for every $X\in\g$ with $\exp(2\pi i X)=e$}}
\end{split}
\]
where the exponential map takes its values in $T$, respectively in $G$.
\end{proof}

\begin{remark*}
Given a homomorphism $G_1\to G_2$ of connected Lie groups, the induced map in cohomology $H^4(BG_2,\Z)\to H^4(BG_1,\Z)$ corresponds to restriction of bilinear forms.
This is clear when the groups are tori, and 
the general case follows since $H^4(BG,\Z)$ is always a subgroup of the corresponding cohomology group $H^4(BT,\Z)$ for the maximal torus.
\end{remark*}

Under the isomorphism provided by the above proposition, the subset $H^4_+(BG,\Z)$ of positive levels corresponds to those 
$G$-invariant symmetric bilinear forms $\langle\,\,,\,\rangle:\g\otimes \g\to \C$ which satisfy the property that $\tfrac12\langle X,X\rangle\in \Z_{> 0}$ for every non-zero $X\in\g$ with $\exp(2\pi i X)=e$.
The above description is slightly more precise than the one given in Section~\ref{intro2}, because we had not specified our normalization of the Chern-Weil homomorphism, and so the exact meaning of $H^4_+(BG,\Z)$ was actually left ambiguous.
The archetypal example of a positive level is given by the bilinear form $$\langle A,B\rangle :=tr(AB)\quad\,\,\, \text{on}\quad\mathfrak{sl}(n).$$
It corresponds to the positive generator of $H^4(BSU(n),\Z)$.

\section{Representations of affine and Heisenberg VOAs}\label{sec: PER}

In this section, we recall the well-known classification of representations of the affine VOAs $L_{\g,k}$ and of the Heisenberg VOAs $M_{\mathfrak z}$ (see \cite[\S3]{MR1159433}\cite[\S2]{MR1822111} and references therein, along with \cite{MR3119224} for a discussion of unitarity).

\subsection{Affine VOAs}\label{sec: aff VOA reps}
Let $\g$ is a simple Lie algebra over $\C$, with Cartan subalgebra $\mathfrak h$.
Let
$\Lambda_\text{root}\subset \Lambda_\text{weight} \subset \mathfrak h^*$ and
$\Lambda_\text{coroot} \subset \Lambda_\text{coweight} \subset \mathfrak h$
be the root, weight, coroot, and coweight lattices
(the root and weight lattices are dual to the coweight and coroot lattices, respectively).
Let $\alpha_1,\ldots,\alpha_n$ be the simple roots, let $\alpha_\mathrm{max}$ be the highest root, and let 
\[
\A=\A(\g):=\{X\in \mathfrak h\,|\,\alpha_i(X)\ge 0\text{ and }\alpha_\mathrm{max}(X)\le 1\}
\]
be the Weyl alcove.
When $\g=\g_1\oplus\ldots\oplus\g_n$ is a direct sum of simple Lie algebras,
we write $\A=\A(\g):=\A(\g_1)\times\ldots\times\A(\g_n)$ 
for the product of the corresponding alcoves.

It is well known that the irreducible positive energy representations of the affine VOA $L_{\g,k}$ are classified by the orbits
of the lattice $k^{-1}(\Lambda_\text{weight})$ under the action of the affine Weyl group $\widehat W:=W\ltimes\Lambda_\text{coroot}$,
and that these representations are all unitary \cite[Thm.\,3.1.3]{MR1159433}\cite[Thm.\,4.8]{MR3119224}.

Let $G$ be the compact simply connected Lie group associated to $\g$.
The level $k\in H^4_+(BG,\Z)$ can be treated as a number when $\g$ is simple.
But it better to think of it as the linear map $k:\mathfrak h\to \mathfrak h^*$ (coming from a bilinear form of $\mathfrak h$), which we 
use to pull back the weight lattice $\Lambda_\text{weight}\subset \mathfrak h^*$ to a lattice $k^{-1}(\Lambda_\text{weight})\subset \mathfrak h$.
Every orbit of $\widehat W$ intersects the Weyl alcove in exactly one point.
The irreps of $L_{\g,k}$ are therefore also classified by
\begin{equation}\label{slfjvksdjfvkja vk}
\A_k=\A_k(\g):=k^{-1}(\Lambda_\text{weight})\cap \A\subset\A.
\end{equation}
We illustrate the sets $\A_k$ with some rank two examples:
\[\quad
\def\nb{node {$\scriptstyle \bullet$}}
\tikzmath[scale=.45]{
\draw[dashed] (-90:2) -- (30:2) -- (150:2) -- cycle;
\path (-90:2) \nb (30:2) \nb (150:2) \nb (-90:.5) \nb (30:.5) \nb (150:.5) \nb;
\path (90:1) \nb +(-30:1.5) \nb +(-150:1.5) \nb;
\path (-150:1)\nb +(90:1.5) \nb +(-30:1.5) \nb;
\path (-30:1) \nb +(90:1.5) \nb +(-150:1.5) \nb;
\node at (0,-3.2) {$A_2$ level 4};
\node at (8,-3.2) {$B_2$ level 3};
\node at (15.8,-3.2) {$G_2$ level 5};
\node[scale=.97] at (24,-3.2) {$A_1{\times} A_1$ level (3,4)};
\pgftransformscale{.58}
\pgftransformxshift{360}
\pgftransformyshift{-110}
\draw[dashed] (0,0) -- (3,3) -- (0,6) -- cycle;
\path (0,0) \nb ++(1,1) \nb ++(1,1) \nb ++(1,1) \nb;
\path (0,2) \nb ++(1,1) \nb ++(1,1) \nb;
\path (0,4) \nb ++(1,1) \nb;
\path (0,6) \nb;
\pgftransformscale{1.38}
\pgftransformxshift{272}
\pgftransformyshift{126}
\draw[dashed] (0,0) -- ++(2.5,0) -- ++(-120:5) -- cycle;
\path (.5,0) \nb -- ++(1,0) \nb -- ++(1,0) \nb
 -- ++(-120:1) \nb -- ++(-1,0) \nb -- ++(-1,0) \nb
 -- ++(-60:1) \nb -- ++(1,0) \nb
  -- ++(-120:1) \nb -- ++(-1,0) \nb -- ++(-60:1) \nb  -- ++(-120:1) \nb;
\pgftransformxshift{260}
\pgftransformyshift{-125}
\draw[dashed] (0,0) rectangle (4,4);
\path (0,0) \nb -- ++(1.333,0) \nb -- ++(1.333,0) \nb -- ++(1.334,0) \nb;
\path (0,1) \nb -- ++(1.333,0) \nb -- ++(1.333,0) \nb -- ++(1.334,0) \nb;
\path (0,2) \nb -- ++(1.333,0) \nb -- ++(1.333,0) \nb -- ++(1.334,0) \nb;
\path (0,3) \nb -- ++(1.333,0) \nb -- ++(1.333,0) \nb -- ++(1.334,0) \nb;
\path (0,4) \nb -- ++(1.333,0) \nb -- ++(1.333,0) \nb -- ++(1.334,0) \nb;
}
\]
Let $\A^\times\subset \A_k$ be the subset of \emph{sharp corners} of the Weyl alcove \cite{MR1880320}:
\begin{equation}\label{eq: From Sawin}
\A^\times\,:=\,\Lambda_\text{coweight}\cap\A_k=\,\Lambda_\text{coweight}\cap \A\,\cong\, \Lambda_\text{coweight}/\Lambda_\text{coroot}.
\end{equation}
It can also be described as the orbit of the origin $0\in\mathsf A$ under the isometry group of the Weyl alcove (automorphism group of the extended Dynkin diagram),
and is canonically isomorphic to the center of $G$ \cite[Thm.\,3]{MR1880320} (see also \cite[\S4, Prop.\,8]{MR682756}).

The following result was proved in \cite[Prop.\,2.20]{MR1822111} by elementary means.
We provide here a less elementary, but possibly more informative proof:

\begin{proposition}
The elements of $\A^\times\subset \A_k$ correspond to invertible $L_{\g,k}$-representations.
\end{proposition}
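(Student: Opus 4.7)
The plan is to compute the quantum dimension $\dim_q(M_\mu)$ for each $\mu \in \A^\times$ and show it equals $1$; invertibility of $M_\mu$ then follows from the standard dichotomy in unitary fusion categories that a simple object has quantum dimension $1$ if and only if it is invertible.

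First I would identify $\A^\times$ with a group of alcove symmetries coming from the extended affine Weyl group. Since the affine Weyl group $\widehat W = W \ltimes \Lambda_\text{coroot}$ acts simply transitively on alcoves, each coset of the extended affine Weyl group $\widetilde W := W \ltimes \Lambda_\text{coweight}$ modulo $\widehat W$ has a unique representative $\tilde\sigma$ stabilizing the Weyl alcove $\A$. This $\tilde\sigma$ acts on $\A$ by an isometry carrying $0$ to the corresponding sharp corner $\mu \in \A^\times$, and realizes a diagram automorphism of the affine Dynkin diagram of $\widehat{\g}$, permuting the positive affine roots.

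The heart of the argument is the Kac--Peterson formula for the $S$-matrix,
\[
\dim_q(M_\mu) \,=\, \frac{S_{0\mu}}{S_{00}} \,=\, \prod_{\alpha \in \Delta_+} \frac{\sin\!\bigl(\pi \langle \alpha, k\mu + \rho \rangle / (k+h^\vee)\bigr)}{\sin\!\bigl(\pi \langle \alpha, \rho \rangle / (k+h^\vee)\bigr)}.
\]
The diagram automorphism $\tilde\sigma$ sends the affine $\rho$-vector (at shifted level $k+h^\vee$) to the shifted weight corresponding to $k\mu$, while permuting positive affine roots; this reorganizes the numerator product into the denominator product under a change of variables, yielding $\dim_q(M_\mu) = 1$. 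Invertibility then follows by a general argument: $\dim_q$ is a ring homomorphism to $\R$; the unit $L_{\g,k}$ is a summand of $M_\mu \boxtimes M_\mu^*$ by rigidity; all simples have quantum dimension $\geq 1$ by unitarity; and $\dim_q(M_\mu \boxtimes M_\mu^*) = 1 \cdot 1 = 1$ forces $M_\mu \boxtimes M_\mu^* \cong L_{\g,k}$.

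The main obstacle is making the Kac--Peterson reduction explicit: one must track the translation part of $\tilde\sigma$ at the shifted level $k+h^\vee$ and its interaction with the shifted $\rho$. This is the ``more informative'' ingredient compared to the elementary fusion-rule manipulations of \cite{MR1822111} --- invertibility of $M_\mu$ is exhibited as a direct manifestation of the symmetry of the affine Dynkin diagram, in keeping with the loop-group perspective of Section~\ref{sec: LG} in which simple currents correspond to non-simply-connected quotients $\tilde G \twoheadrightarrow G$.
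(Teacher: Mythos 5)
Your argument is correct in outline, but it takes a genuinely different route from the paper. The paper's proof is a pure symmetry argument on the fusion rules themselves: after invoking the identification of $\Rep(L_{\g,k})$ with the corresponding quantum-group category, it observes that the quantum Racah algorithm computing fusion multiplicities is invariant under $\mathrm{Isom}(\A)$, and since $\A^\times$ is precisely the $\mathrm{Isom}(\A)$-orbit of $0$, invertibility is transported from the vacuum to every sharp corner --- no $S$-matrix and no quantum dimensions appear. You instead compute $\dim_q$ from the Kac--Peterson formula and conclude via the categorical dichotomy that a simple object is invertible iff $\dim_q=1$. Both arguments exploit the same underlying symmetry (the stabilizer $\Omega\cong\Lambda_{\text{coweight}}/\Lambda_{\text{coroot}}$ of the alcove in the extended affine Weyl group, acting as automorphisms of the extended Dynkin diagram), and both rest on the same deep input, namely that $\Rep(L_{\g,k})$ is a fusion category with the expected combinatorial data; your route additionally needs unitarity (or pseudo-unitarity) for the step ``all simples have $\dim_q\ge 1$'', which is available here. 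What your route buys is an explicit invariant ($\dim_q=1$) and closer contact with the simple-current literature; what the paper's route buys is that it never has to evaluate anything, only to observe an invariance. One caution: the sine-product identity you defer as ``the main obstacle'' is really the whole content of your computation. Concretely, for a sharp corner $\omega$ one has $\langle\alpha,\omega\rangle\in\{0,1\}$ for every $\alpha\in\Delta_+$, so the numerator angles differ from the denominator angles only for roots with $\langle\alpha,\omega\rangle=1$, where $\sin\bigl(\pi(k+\langle\alpha,\rho\rangle)/(k+h^\vee)\bigr)=\sin\bigl(\pi(h^\vee-\langle\alpha,\rho\rangle)/(k+h^\vee)\bigr)$; you then need that the multiset $\{\langle\alpha,\rho\rangle:\langle\alpha,\omega\rangle=1\}$ is invariant under $x\mapsto h^\vee-x$. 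This is standard but should be written out rather than gestured at. Finally, note the proposition claims only one inclusion, so the $E_8$ level $2$ anomaly (where $\A^\times=\{0\}$ but there are two invertible modules) is harmless for both arguments.
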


\begin{proof}
By the combined work of 
\cite{MR1104840, MR1186962, MR1239506, MR1239507, MR1276910, MR2105507, MR1384612, MR3053762} (see also \cite{MO:178113} for a discussion of the few exceptional cases not covered by Kazhdan and Lusztig, including $E_8$ level $2$), the representation category of the VOA $L_{\g,k}$ is known to agree with that of the corresponding quantum group at root of unity.
The fusion rules are described by the quantum Racah formula \cite[\S5]{MR2286123} (see \cite[Remark~4]{MR2286123} for a very beautiful algorithm that allows one to compute all the rank two fusion multiplicities by hand).

Using the quantum Racah formula, one easily verifies that the representations $L_{\g,k}(\lambda)$ corresponding to elements $\lambda\in \A^\times$ are invertible with respect to fusion.
Indeed, the algorithm is invariant under $\mathrm{Isom}(\mathsf A)$.
So if $\mu, \nu \in \A_k$ are in the same $\mathrm{isom}(\mathsf A)$ orbit, then
$L_{\g,k}(\mu)$ is invertible if and only if $L_{\g,k}(\nu)$ is. Now, use that $L_{\g,k}(0)=L_{\g,k}$ is invertible.
\end{proof}

It was proven by Fuchs \cite{MR1096120} that for all simple Lie algebras and levels \emph{except $E_8$ at level $2$}, the subset $\A^\times\subset \A_k$ exhausts all invertible $L_{\g,k}$-modules.
In the case of $E_8$ level $2$, there is only one sharp corner, but the group of invertible modules has order two.
\medskip

\subsection{Heisenberg VOAs}\label{sec:Heis reps}
Let $\mathfrak z$ be a complex vector space equipped with a real form $\mathfrak z_\R$ and a positive definite inner product on the latter.
The representation theory of the Heisenberg VOAs $M_{\mathfrak z}$ associated to $\mathfrak z$ is very easy (e.g. \cite[Prop.\,2.17]{MR1822111}).
The irreducible representations of $M_{\mathfrak z}$ are all invertible and they
are classified by the points of $\mathfrak z$.
Among those, the unitary ones correspond to the points of $\mathfrak z_\R$.\footnote{We note that \cite[Prop.\,4.10]{MR3119224} only lists a subset of the unitary irreducible $M_{\mathfrak z}$-modules.}

\section{Simple current extensions}\label{Sec: Simple current extensions}

Simple current extensions were introduced in \cite{MR1030445} in the physics literature.

Given a vertex operator algebra $W$, we let $\mathrm{Rep}^\times(W)$ denote by its group of isomorphism classes of invertible modules.
For $\lambda\in\mathrm{Rep}^\times(W)$, we write $M_\lambda$ for a representative of the isomorphism class.
Let $$\pi\subset\mathrm{Rep}^\times(W)$$ be a subgroup.
A simple current extensions of $W$ by $\pi$ is an inclusion of VOAs $W\subset V$ (sending the conformal vector of $W$ to the conformal vector of $V$), such that $V\cong\bigoplus_{\lambda\in \pi}M_\lambda$ as a $W$-module, and such that the VOA structure on $V$ is compatible with the $\pi$-grading.
We write $$V=\pi\ltimes W$$ for the simple current extension.

The following proposition was shown in \cite{arXiv:1408.5215} (see \cite[Thm.\,3.3]{MR3119224} for a special case), without reference to unitarity, modulo a certain issue that the author was not able to solve.
In the more recent paper \cite{arXiv:1511.08754},
the issue was resolved for unitary VOAs, and non-unitary counter-examples were presented.\footnote{We note that a similar result in the world of chiral conformal nets was proved in \cite[Lem.\,2.1]{MR2263720} (chiral conformal nets are conjectured to be equivalent to unitary VOAs \cite{arXiv:1503.01260}).}

\begin{proposition}[\cite{arXiv:1408.5215},\cite{arXiv:1511.08754}]\label{prop/conj}
Let $W$ be a unitary vertex operator algebra, and let $\pi\subset\mathrm{Rep}^\times(W)$ be a subgroup of its invertible unitary modules.

Then the simple current extension $\pi\ltimes W$ exists if and only if the modules $M_\lambda$ corresponding to the elements $\lambda\in \pi$ have integral $L_0$-eigenvalues.
In that case, the VOA $\pi\ltimes W$ is unique up to isomorphism.
\end{proposition}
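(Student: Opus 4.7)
The plan splits into necessity, existence, and uniqueness. Necessity is immediate: if $V = \pi \ltimes W$ is a VOA with the same conformal vector as $W$, then $V$ is $\mathbb{Z}$-graded by $L_0$, and this grading restricts to each summand $M_\lambda \subset V$, forcing integer $L_0$-eigenvalues on each $M_\lambda$.

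For existence, I would work inside the braided tensor category $\mathrm{Rep}(W)$ of unitary $W$-modules. The invertible modules form a categorical group, and the ribbon twist $\theta_\lambda = e^{2\pi i h_\lambda}$ equals $1$ precisely when $M_\lambda$ has integer $L_0$-spectrum; under the hypothesis, $\pi$ is thus a pointed subcategory with trivial twist. The goal is to promote $A := \bigoplus_{\lambda \in \pi} M_\lambda$ to a commutative associative algebra object in $\mathrm{Rep}(W)$, since such algebra objects correspond to VOA extensions by standard results. The purely categorical obstruction lies in $H^3(\pi,\mathbb{C}^\times)$; unitarity allows one to normalize each fusion intertwiner $M_\lambda \otimes M_\mu \to M_{\lambda+\mu}$ to be unitary, at which point the associator becomes a $U(1)$-valued 3-cocycle on $\pi$, trivializable by a direct computation using additivity of conformal weights modulo $\mathbb{Z}$.

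For uniqueness, two such extensions differ a priori by a class in $H^2(\pi,\mathbb{C}^\times)$; the freedom to rescale each embedding $M_\lambda\hookrightarrow V$ by a nonzero scalar kills coboundaries, and the residual class is shown to be trivial using $L_0$-compatibility together with the unitarity normalization, yielding an explicit $W$-linear, $\pi$-graded VOA isomorphism between any two such extensions.

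The main obstacle is the final analytic step in existence: upgrading the algebra object in $\mathrm{Rep}(W)$ to an honest VOA, i.e., producing vertex operators $Y_V(v,z)$ whose iterated products converge and satisfy the VOA form of associativity. This is exactly the issue left unresolved in \cite{arXiv:1408.5215} and handled under the unitarity hypothesis in \cite{arXiv:1511.08754}; positivity of the unitary inner product is what controls the required analytic continuations and prevents the non-unitary counterexamples from occurring.
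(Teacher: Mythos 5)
Your overall route is the same as the paper's: view $\bigoplus_{\lambda\in\pi}M_\lambda$ as an object of the pointed braided subcategory of $\mathrm{Rep}(W)$ generated by $\pi$, show it carries a unique commutative algebra structure, and outsource the upgrade from algebra object to honest VOA to \cite{MR3339173} (the paper carries this out only for $W$ rational in the sense of that reference, which is also what you implicitly need for $\mathrm{Rep}(W)$ to be a braided tensor category at all). However, two of your intermediate claims are misstated in ways that would derail the ``direct computation'' you propose. First, the obstruction to a \emph{commutative} algebra structure on $\bigoplus_\lambda M_\lambda$ is not the ordinary associator class in $H^3(\pi,\C^\times)$ but the Eilenberg--MacLane abelian cohomology class of the pair (associator, braiding), and that class is exactly the quadratic form $\lambda\mapsto\theta_\lambda$. ``Additivity of conformal weights modulo $\Z$'' only forces the associated bilinear form (the double braiding) to be trivial, which is strictly weaker than $\theta\equiv 1$: the nontrivial invertible $L_{E_8,2}$-module has $h=3/2$, so weights are additive mod $\Z$ on $\pi=\Z/2$, yet $\theta=-1$ and the order-two extension does \emph{not} exist. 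The trivialization must therefore be run from the hypothesis $\theta_\lambda=1$ itself, which is how the paper proceeds: it combines \cite[Prop 2.14]{MR2076134} (a braided category all of whose objects are invertible is determined by its fusion rules and the self-braidings of its simple objects) with the identification $\theta_\lambda=e^{2\pi i L_0}|_{M_\lambda}$ of the self-braiding with the conformal spin from \cite{arXiv:1511.08754} --- this identification, not a normalization of intertwiners, is where unitarity enters --- to conclude $\mathcal{C}_\pi\simeq\mathrm{Vec}[\pi]$ with trivial associator \emph{and} trivial braiding.

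Second, your uniqueness step is off for the same reason: ordinary $H^2(\pi,\C^\times)$ is the Schur multiplier and need not vanish (it is $\Z/2$ for $\pi=(\Z/2)^2$, the corresponding twisted group algebra being a genuine second \emph{associative} structure). What rules such classes out is commutativity, which restricts you to symmetric $2$-cocycles, and those are all coboundaries because $\C^\times$ is divisible; an appeal to ``$L_0$-compatibility'' is neither needed nor sufficient. The paper obtains existence and uniqueness in one stroke from the uniqueness of the commutative graded algebra structure on $\bigoplus_{\lambda\in\pi}\C_\lambda$ in $\mathrm{Vec}[\pi]$. With these two corrections your argument becomes the paper's.
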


\begin{conjecture}\label{conj: unitary simple current ext}
For $W$ and $\pi$ as above, the simple current extension $\pi\ltimes W$ is unitary.
\end{conjecture}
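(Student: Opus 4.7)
The plan is to produce an invariant positive-definite Hermitian form on $V=\pi\ltimes W=\bigoplus_{\lambda\in\pi}M_\lambda$ by coherently rescaling the unitary structures on the simple current summands. Since each $M_\lambda$ is an irreducible unitary $W$-module (invertibility forces irreducibility), it carries a $W$-invariant positive-definite Hermitian form $H_\lambda$, unique up to a positive real scalar, with $H_0$ fixed to be the given form on $W$. Take as candidate
\[
H_V:=\bigoplus_{\lambda\in\pi}c_\lambda H_\lambda\qquad(\text{distinct summands orthogonal})
\]
for positive scalars $c_\lambda>0$ with $c_0=1$; the task is to show these scalars can be chosen so that $H_V$ is invariant under the vertex operator map of $V$.

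Next, unpack the invariance condition. The VOA multiplication of $V$ restricts to intertwining operators $Y_{\lambda,\mu}\colon M_\lambda\otimes M_\mu\to M_{\lambda+\mu}\{z\}$, each lying in the one-dimensional space of $W$-intertwiners of its type (by the simple-current hypothesis). The antilinear involution $\theta$ of a unitary extension must swap $M_\lambda$ with $M_{-\lambda}$, since conjugation corresponds to duality and for invertible modules $M_\lambda^\vee\cong M_{-\lambda}$. The formal adjoint of $Y_{\lambda,\mu}$ with respect to the un-rescaled forms $H_\lambda, H_\mu, H_{\lambda+\mu}$ lies in the (again one-dimensional) space of intertwiners of type $\binom{M_\mu}{M_{-\lambda}\,M_{\lambda+\mu}}$, spanned by the canonical intertwiner $Y_{-\lambda,\lambda+\mu}$ coming from $V$ itself. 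Comparing the two gives a scalar $\gamma(\lambda,\mu)\in\C^\times$, and invariance of $H_V$ translates into the family of equations $c_{\lambda+\mu}=\gamma(\lambda,\mu)\,c_\mu$ for all $\lambda,\mu\in\pi$.

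Setting $\mu=0$ forces $c_\lambda=\gamma(\lambda,0)$; the remaining equations then reduce to the cocycle-like relation $\gamma(\lambda,\mu)\gamma(\mu,0)=\gamma(\lambda+\mu,0)$, which follows from the associativity of the VOA structure on $V$ applied to the iterated product $M_\lambda\otimes M_\mu\otimes M_0\to M_{\lambda+\mu}$. Hence all the consistency conditions are automatic once one takes $c_\lambda:=\gamma(\lambda,0)$.

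The hard part is positivity: showing $\gamma(\lambda,0)\in\R_{>0}$. Equivalently, one has to show that the nondegenerate pairing $M_{-\lambda}\otimes M_\lambda\to W$ induced by the VOA product of $V$ becomes positive-definite when precomposed with the antilinear map $\theta_\lambda\colon M_\lambda\to M_{-\lambda}$. This is really the statement that $V$, viewed as an \'etale algebra in the unitary modular tensor category $\Rep(W)$, carries a compatible $*$-structure. I expect it to follow from the general machinery of unitary structures on modular tensor categories of unitary VOA modules (Carpi--Ciamprone--Pinzari, Gui, \ldots), using the fact that the simple currents $M_\lambda$ have quantum dimension $1$; but isolating and verifying this positivity statement cleanly for an arbitrary unitary rational $W$ is the key technical obstruction, and is precisely the reason the result is stated here only as a conjecture.
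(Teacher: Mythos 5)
The statement you are trying to prove is left as a \emph{conjecture} in the paper: the author explicitly does not prove it, and the main classification theorem is stated as conditional on it. So there is no proof in the paper to compare against, and the relevant question is whether your argument actually closes the conjecture. It does not. Your reduction is sensible and matches the expected shape of a proof: decompose the candidate form as $\bigoplus_\lambda c_\lambda H_\lambda$, use one-dimensionality of the relevant intertwiner spaces to define scalars $\gamma(\lambda,\mu)$ comparing the formal adjoint of $Y_{\lambda,\mu}$ with the canonical intertwiner, use associativity to reduce the consistency equations to $c_\lambda=\gamma(\lambda,0)$, and then observe that everything hinges on $\gamma(\lambda,0)\in\R_{>0}$. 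But that last step is precisely the entire content of the conjecture, and you concede in your final paragraph that you have not verified it, only that you ``expect'' it to follow from general machinery on unitary modular tensor categories. An appeal to results you have not checked the hypotheses of (in particular, whether $\Rep(W)$ for an arbitrary unitary rational $W$ in the paper's weak sense carries a compatible unitary modular structure at all, and whether the canonical \'etale algebra structure on $\bigoplus_\lambda M_\lambda$ is automatically a $*$-algebra/Q-system in it) is not a proof.

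Two further points deserve attention even within your outline. First, unitarity of a VOA requires more than a positive-definite invariant Hermitian form: one also needs an antilinear automorphism $\theta$ of the extended VOA $V$, compatible with the product and restricting to the given PCT operator on $W$; you assert that $\theta$ must swap $M_\lambda$ with $M_{-\lambda}$, but its existence on $V$ has to be constructed, not assumed. Second, the claim that the formal adjoint of $Y_{\lambda,\mu}$ lands in the span of $Y_{-\lambda,\lambda+\mu}$ involves contragredients and skew-symmetry operations on intertwining operators, which introduce phases depending on conventions and on the conformal weights $h_\lambda$; your cocycle computation $\gamma(\lambda,\mu)\gamma(\mu,0)=\gamma(\lambda+\mu,0)$ needs these tracked carefully, and in particular it is not obvious from what you wrote that $\gamma(\lambda,0)$ is even real. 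In short: the proposal is a reasonable reduction, but the conjecture remains open at exactly the point where you stop.
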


Clearly, if $\bigoplus_{\lambda\in \pi}M_\lambda$ admits a VOA structure, then the $L_0$-eigenvalues of each $M_\lambda$ must be integral.
The difficult part is to show, provided the latter condition is satisfied, that $\bigoplus_{\lambda\in \pi}M_a$ admits a VOA structure.
We provide a proof of Proposition~\ref{prop/conj} when $W$ is rational in the sense of \cite[Appendix]{MR3339173}, by relying on the classification of extensions established in \cite{MR3339173}.

\begin{proof}
Let $W$ be a unitary VOA which satisfies the conditions named in \cite[Appendix]{MR3339173} (necessary for the results in that paper to apply), and let $\pi\subset\mathrm{Rep}^\times(W)$ be a group of invertible modules.

Let $\mathcal{C}_\pi\subset \mathrm{Rep}(W)$ be the subcategory spanned by $\pi$, i.e.,
the full subcategory whose objects are isomorphic to direct sums of elements in $\pi$.
A braided tensor category all of whose objects are invertible is entirely determined by its fusion rules and by the self-braiding of its simple objects \cite[Prop 2.14]{MR2076134}.
The self-braidings are equal to the conformal spins $\theta_\lambda=e^{2\pi i L_0}|_{M_\lambda}$ by \cite[(1.1) \& (1.2)]{arXiv:1511.08754} (this is where unitarity gets used).
By assumption, those numbers are all $1$ for $\lambda\in\pi$, and so $\mathcal{C}_\pi$ is equivalent as a braided tensor category to the category of $\pi$-graded vector spaces, with its trivial braiding:
\[
\mathcal{C}_\pi\,\simeq\, \mathrm{Vec}[\pi].
\]
The graded vector space $\bigoplus_{\lambda\in \pi}\C_\lambda$ admits a unique commutative graded algebra structure in $\mathrm{Vec}[\pi]$.
It follows that $\bigoplus_{\lambda\in \pi}M_\lambda\in\mathcal{C}_\pi\subset\mathrm{Rep}(W)$ also has a unique commutative algebra structure.
By the main result of \cite{MR3339173}, we conclude that the $\bigoplus_{\lambda\in \pi}M_\lambda$ admits a unique VOA structure extending that of $W$.
\end{proof}

We finish this section by analyzing our main example of interest: $W=L_{\g,k}\otimes M_{\mathfrak z}$.
We assume that $L_{\g,k}\otimes M_{\mathfrak z}$ is equipped with a unitary structure,
coming from a real form of $\g$ and of $\mathfrak z$.
We also assume for simplicity that
$\pi$ does not involve the non-trivial invertible $L_{E_8,2}$-module, so that
$\pi\subset \A^\times\times\mathfrak z_\R$.
Let $L:= \pi \cap \mathfrak z_\R$, and let us
furthermore assume that $\mathrm{rk}(L)=\dim(\mathfrak z)$, so that $V_L := L\ltimes M_{\mathfrak z}$ is a lattice VOA.
The simple current extension $L\ltimes (L_{\g,k}\otimes M_{\mathfrak z})$ is isomorphic to the tensor product $L_{\g,k}\otimes V_L$, and is unique by \cite[Prop.\,3.22]{MR1822111}.
The existence and uniqueness of
\[
\pi\ltimes (L_{\g,k}\otimes M_{\mathfrak z})=(\pi/L)\ltimes \big(L\ltimes (L_{\g,k}\otimes M_{\mathfrak z})\big)
=(\pi/L)\ltimes (L_{\g,k}\otimes V_L)
\]
then follows from our earlier proof, since $L_{\g,k}\otimes V_L$ is rational
in the sense of~\cite{MR3339173}.
This provides an independent proof of Proposition \ref{prop/conj} in our case of interest.

We remark that simple current extensions of $L_{\g,k}\otimes M_{\mathfrak z}$
were the main object of study of \cite{MR1822111}.
Our results are small variations of the ones in that paper.

\section{The minimal energy}\label{sec: min en}

The \emph{minimal energy} of a representation is the smallest eigenvalue of the energy operator $L_0$.
Given a representation $M_\lambda$ of some unitary VOA, we write $h_\lambda\in\R$ for its minimal energy.
It is related to the conformal spin by the formula $\theta_\lambda=\exp(2\pi i h_\lambda)$.

In view of Proposition \ref{prop/conj}, it is important to compute the minimal energies of the invertible unitary representations of $L_{\g,k}\otimes M_{\mathfrak z}$
and to determine whether they are integers.

\subsection{Affine VOAs}

Recall from Section \ref{sec: aff VOA reps} that the irreducible representations of the affine VOA $L_{\g,k}$ are classified by the finite set $\A_k\subset \mathsf A$.
For $\lambda\in \A_k$, we denote by $h_\lambda\in\R$ the minimal energy of the corresponding representation $M_\lambda=L_{\g,k}(\lambda)$.
When $\g$ is simple, it is given by the well known formula \cite[(12.8.11)]{MR1104219}
\begin{equation}\label{eq: h -- ss case}
h_\lambda\,=\,\frac{\|\lambda+\rho\|^2-\|\rho\|^2}{2(k+g^\vee)}.
\end{equation}
Here, $\rho$ is the sum of the fundamental weights (also the half-sum of the positive roots), $g^\vee=\langle\rho,\alpha_\mathrm{max}\rangle+1$ is the dual Coxeter number ($\alpha_\mathrm{max}$ is the highest root),
the square norm is taken with respect to the basic inner product on $\mathfrak h^*$ 
(the one for which long roots have square-length $2$),
and $k\in H^4(BG,\Z)=\Z$ is treated as a number.
Note that, for the purpose of equation \eqref{eq: h -- ss case}, $\lambda\in \mathsf A_k\subset \mathfrak h$ is now viewed an element of the weight lattice, via the isomorphism $\mathfrak h\to \mathfrak h^*$ provided by $k$.

When $\lambda$ corresponds to an invertible representation (and $(\g,k)$ is not $E_8$ level $2$), the above formula simplifies greatly:

\begin{proposition}\label{min en computation}
Let $\mathfrak g$ be a simple Lie algebra.
Let $\omega$ be an element of $\A^\times\subset \mathfrak h$, and
let $\lambda:=k\omega\in \mathfrak h^*$ be the corresponding weight.
Let $\langle\,\,,\,\rangle_k:\g\otimes\g\to\C$ be the symmetric bilinear form associated to $k\in H^4_+(BG,\Z)$ under Theorem \ref{prop: H^4(BG)}.
Then we have
\begin{equation}\label{eq: h_lambda = ...}
h_\lambda\,=\,\tfrac12\langle\omega,\omega\rangle_k.
\end{equation}
\end{proposition}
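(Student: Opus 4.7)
The plan is to substitute $\lambda = k\omega$ into the general formula~\eqref{eq: h -- ss case} and to reduce the desired equality~\eqref{eq: h_lambda = ...} to a single Lie-theoretic identity about sharp corners. Since $\g$ is simple, Theorem~\ref{prop: H^4(BG)} identifies $H^4_+(BG,\Z)$ with $\Z_{>0}$, the level $k$ corresponding to $k$ times the basic inner product $B$ on $\g$; in particular $\langle\omega,\omega\rangle_k = k\,B(\omega,\omega)$. Under the identification $\mathfrak h \to \mathfrak h^*$ furnished by $B$, the weight $\lambda = k\omega \in \mathfrak h^*$ corresponds to $k\omega \in \mathfrak h$, so $\|\lambda\|^2 = k^2 B(\omega,\omega)$ and $\langle\lambda,\rho\rangle = k\,\rho(\omega)$ (where $\rho \in \mathfrak h^*$ is evaluated on $\omega \in \mathfrak h$).

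Substituting into~\eqref{eq: h -- ss case} and expanding the numerator gives
\[
h_\lambda \;=\; \frac{k^2\, B(\omega,\omega) + 2k\,\rho(\omega)}{2(k+g^\vee)}.
\]
The target equality $h_\lambda = \tfrac12\langle\omega,\omega\rangle_k = \tfrac12 k\, B(\omega,\omega)$ is then equivalent, after cross-multiplying and cancelling a common factor of $k$, to the Lie-theoretic identity
\[
2\,\rho(\omega) \;=\; g^\vee\, B(\omega,\omega), \qquad \omega \in \A^\times.
\]

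It remains to verify this identity. For $\omega = 0$ it is trivial. For $\omega \neq 0$, the description of $\A^\times$ in~\eqref{eq: From Sawin} identifies $\omega$ with a minuscule fundamental coweight $\omega_j^\vee$. One can establish the identity either by a case-by-case check using the classification of minuscule nodes across simple Dynkin diagrams, where in each case it reduces to a short computation with the inverse Cartan matrix (for example, in the simply-laced case one has $\rho(\omega_j^\vee) = \sum_i (C^{-1})_{ij}$ and $B(\omega_j^\vee,\omega_j^\vee) = (C^{-1})_{jj}$); or, conceptually, by noting that $\omega \in \A^\times$ determines a translation $T_\omega$ in the extended affine Weyl group $W \ltimes \Lambda_\text{coweight}$ which realises an outer automorphism of the affine Kac--Moody algebra, and invoking the standard spectral-flow formula for how $T_\omega$ shifts the $L_0$-eigenvalue, which applied to the vacuum yields precisely energy $\tfrac{k}{2} B(\omega,\omega)$. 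The main obstacle is the identity itself; the reduction from~\eqref{eq: h -- ss case} is routine.
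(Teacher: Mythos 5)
Your reduction is correct and its endpoint is exactly the paper's: substituting $\lambda=k\omega$ into \eqref{eq: h -- ss case} and clearing denominators leaves the single identity $2\langle\rho,\omega\rangle=g^\vee\langle\omega,\omega\rangle$ for $\omega\in\A^\times$, which is precisely the content of the paper's Lemma~\ref{lem: isom(A)} once one expands $\|\rho-g^\vee\omega\|^2=\|\rho\|^2$. Where you diverge is in how that identity is established. The paper proves it uniformly, with no case analysis: $\rho$ is the unique weight interior to $g^\vee\A$, hence is fixed by $\mathrm{isom}(g^\vee\A)$, and since $0$ and $g^\vee\omega$ lie in one orbit of that isometry group they are equidistant from $\rho$. (The paper also organizes the algebra slightly differently, observing that the numerator vanishes at $k=-g^\vee$ so that $h_\lambda$ is linear in $k$, but this is cosmetic.) Your first route --- identifying the nonzero sharp corners with minuscule fundamental coweights and checking $2\sum_i(C^{-1})_{ij}=g^\vee (C^{-1})_{jj}$ type by type --- does work (e.g.\ for $A_n$ both sides equal $j(n+1-j)$), but you only indicate the computation for one family, so as written the key identity is asserted rather than proved; completing it requires running through $B_n,C_n,D_n,E_6,E_7$ as well. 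Your second route, via the spectral-flow/outer-automorphism action of $T_\omega$ on the affine algebra, is legitimate and is in fact acknowledged by the paper itself (end of Section~\ref{sec: min en}, via the twisting construction of \cite{MR1475118} and formula (2.41) of \cite{MR1822111}) as an alternative proof of the proposition; just be aware that if you take that road you are importing the minimal-energy shift formula as a black box rather than deriving \eqref{eq: h_lambda = ...} from \eqref{eq: h -- ss case}. The isometry-of-the-alcove argument is worth internalizing: it replaces both of your options with three lines and no classification.
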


\begin{proof}
Let $\langle\,\,,\,\rangle$ denote the basic inner product on $\mathfrak h^*$.
We identify $\mathfrak h$ with its dual $\mathfrak h^*$ via the basic inner product, so as to have $\langle\,\,,\,\rangle_k$  on $\mathfrak h$ correspond to $k\langle\,\,,\,\rangle$ on $\mathfrak h^*$.
The equation $\lambda=k\omega$ can be interpreted in the following two equivalent ways: (1) viewing $k$ as bilinear form, $\lambda$ is the image of $\omega$ under the induced map $k:\mathfrak h\to \mathfrak h^*$;
(2) treating $k$ as a mere number and using the basic inner product to identify $\omega\in \mathfrak h$ with an element of $\mathfrak h^*$, again denoted $\omega$, then 
$\lambda$  is $k$ times $\omega$.
In what follows, we will use the second interpretation.
We will show that
\begin{equation}\label{eq: minimal energy formula}
h_\lambda\,\,=\,\,\frac{\| k\omega+\rho\|^2-\|\rho\|^2}{2(k+g^\vee)}
\,\,=\,\,
\frac{k\langle\omega,\rho\rangle}{g^\vee}
\,\,=\,\,k\!\;\!\;{\cdot}\!\; \tfrac 12\langle\omega,\omega\rangle.
\end{equation}
The first equality is just \eqref{eq: h -- ss case}.
By Lemma \ref{lem: isom(A)}, the numerator $\| k\omega+\rho\|^2-\|\rho\|^2$ vanishes when $k=-g^\vee$. The function $k\mapsto \frac{\| k\omega+\rho\|^2-\|\rho\|^2}{2(k+g^\vee)}$ is therefore linear. One easily checks that it vanishes at zero and that its derivative at zero is $\frac{\langle\omega,\rho\rangle}{g^\vee}$.
This establishes the second equality in \eqref{eq: minimal energy formula}.
Finally, by the same Lemma \ref{lem: isom(A)}, the point $\frac \rho {g^\vee}$ is equidistant to $0$ and to $\omega$.
It is therefore on the bisecting hyperplane of the segment $[0,\omega]$, and so $\langle\omega,\frac\rho{g^\vee}\rangle=\langle\omega,\frac\omega2\rangle=\tfrac12\langle\omega,\omega\rangle$.
\end{proof}

\begin{lemma}\label{lem: isom(A)}
Let $\omega$ be as above (a vertex of $\mathsf A$ in the $\mathrm{isom}(\mathsf A)$-orbit of $0$), and let $g^\vee$ be the dual Coxeter number. Then $\| \rho - g^\vee\omega\| = \|\rho\|$.
\end{lemma}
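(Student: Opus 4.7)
Expanding $\|\rho - g^\vee\omega\|^2 = \|\rho\|^2$ and using the basic-inner-product identification of $\mathfrak h$ with $\mathfrak h^*$, the claim reduces to the identity
\[
\rho(\omega) \,=\, \tfrac12\, g^\vee\!\; \|\omega\|^2,
\]
where $\rho(\omega)$ denotes the natural pairing $\mathfrak h^* \times \mathfrak h \to \C$. The case $\omega = 0$ is trivial, so I focus on $\omega \ne 0$. I would prove this equivalent identity by chaining three standard ingredients.

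First, by \eqref{eq: From Sawin} the non-trivial sharp corners are the fundamental coweights $\omega_i^\vee$ whose mark $a_i$ (the coefficient of $\alpha_i$ in $\alpha_\mathrm{max} = \sum_j a_j \alpha_j$) equals $1$. Indeed, $\omega_i^\vee/a_i$ lies in the coweight lattice $\Lambda_\text{coweight}$ iff $a_i = 1$. Hence, writing $\omega = \omega_i^\vee$ with $a_i = 1$, every positive root $\alpha = \sum_j c_j \alpha_j$ satisfies $0 \le c_i \le a_i = 1$, so $\alpha(\omega) = c_i \in \{0,1\}$; in particular $\alpha(\omega) = \alpha(\omega)^2$ on the set $\Phi^+$ of positive roots. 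Second, the restriction of the Killing form to $\mathfrak h$ equals $2g^\vee$ times the basic inner product (a standard fact that can essentially serve as the definition of the dual Coxeter number), which, applied to $\omega$ and halved, gives
\[
\sum_{\alpha \in \Phi^+} \alpha(\omega)^2 \,=\, \tfrac12\sum_{\alpha\in\Phi}\alpha(\omega)^2 \,=\, g^\vee\!\;\|\omega\|^2.
\]
Third, Weyl's identity $2\rho = \sum_{\alpha>0}\alpha$ yields $2\rho(\omega) = \sum_{\alpha>0}\alpha(\omega)$.

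Stringing these three facts together gives
\[
2\,\rho(\omega) \,=\,\sum_{\alpha > 0}\alpha(\omega) \,=\, \sum_{\alpha > 0}\alpha(\omega)^2 \,=\, g^\vee\!\;\|\omega\|^2,
\]
which is exactly the reduced identity. I don't expect a substantial obstacle; the main thing to be careful about is bookkeeping of normalizations: verifying that the constant relating the Killing form and the basic inner product is indeed $2g^\vee$ under the ``long roots of squared length $2$'' convention, and that the element $\rho \in \mathfrak h^*$ in the statement pairs with $\omega \in \mathfrak h$ in the same way as in Weyl's identity once $\mathfrak h$ and $\mathfrak h^*$ are identified via the basic form.
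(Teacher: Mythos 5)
Your proof is correct, but it takes a genuinely different route from the paper's. The paper argues geometrically: the dual Coxeter number is characterized by the fact that $\rho$ is the unique weight in the interior of the dilated alcove $g^\vee\mathsf A$, hence $\rho$ is fixed by $\mathrm{isom}(g^\vee\mathsf A)$; since $0$ and $g^\vee\omega$ are vertices in the same orbit of that isometry group, they are equidistant from the fixed point $\rho$, which is exactly the claim. You instead compute directly: after reducing to $\rho(\omega)=\tfrac12 g^\vee\|\omega\|^2$, you identify the nonzero sharp corners as the cominuscule fundamental coweights ($a_i=1$), so that $\alpha(\omega)\in\{0,1\}$ for $\alpha\in\Phi^+$, and then chain Weyl's identity $2\rho=\sum_{\alpha>0}\alpha$ with the normalization $\kappa|_{\mathfrak h}=2g^\vee\langle\,\,,\,\rangle_{\text{basic}}$ to get $2\rho(\omega)=\sum_{\alpha>0}\alpha(\omega)=\sum_{\alpha>0}\alpha(\omega)^2=g^\vee\|\omega\|^2$. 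Each step is a standard fact and the bookkeeping of normalizations works out under the ``long roots have squared length $2$'' convention, so there is no gap. The paper's argument is shorter and more conceptual, and it is the same symmetry idea that the surrounding proof of Proposition~\ref{min en computation} reuses (equidistance of $\frac{\rho}{g^\vee}$ from $0$ and $\omega$); your argument is more computational but self-contained, makes the role of $g^\vee$ as the Killing-form normalization explicit, and incidentally re-derives the description of $\A^\times$ as the cominuscule vertices rather than quoting it.
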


\begin{proof}
The dual Coxeter number $g^\vee$ is such that
$\rho\in\Lambda_\text{weight}$ is the unique weight in the interior of $g^\vee \mathsf A$.
The vector $\rho$ is therefore fixed under the action of $\mathrm{isom}(g^\vee \mathsf A)$.
The vertices $0$ and $g^\vee\omega$ are in the same orbit of the isometry group of $g^\vee \mathsf A$.
They are therefore equidistant to $\rho$.
\end{proof}

\subsection{Heisenberg VOAs}

As mentioned in Section \ref{sec:Heis reps}, the unitary representations of the Heisenberg VOA $M_{\mathfrak z}$ are parametrized by the real part $\mathfrak z_\R$ of $\mathfrak z$.
We write $M_{\mathfrak z}(\lambda)$ for the representation of $M_{\mathfrak z}$ corresponding to $\lambda\in \mathfrak z_\R$.
Its minimal energy is given by the well-known formula
\begin{equation}\label{eq: min energy Heis}
h_\lambda\,=\,\tfrac12\langle\lambda,\lambda\rangle_{\mathfrak z}.
\end{equation}
The striking similarity between the formulas
(\ref{eq: h_lambda = ...}) and (\ref{eq: min energy Heis}) can be explained by the fact that
the same twisting construction \cite[\S3]{MR1475118} can be used to construct $L_{\g,k}(\lambda)$ from $L_{\g,k}$ and $M_{\mathfrak z}(\lambda)$ from $M_{\mathfrak z}$.
In both cases, the minimal energy of the resulting representation is given by formula \cite[(2.41)]{MR1822111}.
This provides a proof of equation (\ref{eq: min energy Heis}), and an alternative proof of Proposition~\ref{min en computation}.

\section{The classification of chiral WZW models}

Let $G$ be a compact connected Lie group.
Pick a finite cover $G^{ss}\times T_0$ of $G$ which is the product of a semi-simple simply connected Lie group $G^{ss}$ and a torus $T_0$.
Let $\tilde G$ and $\tilde T_0$ be the universal covers of $G$ and of $T_0$,
so that
\[
\tilde G\,\,\cong\,\, G^{ss}\times \tilde T_0.
\]
We identify the fundamental group $\pi:=\pi_1(G)$ with the kernel of the projection $\tilde G \to G$.
It is a subgroup of the center $\pi\subset Z(\tilde G)$, and satisfies $\mathrm{rk}(\pi\cap \tilde T_0)=\dim(T_0)$.

Conversely, if we start from a pair $(\tilde G,\pi)$
with $\pi\subset Z(\tilde G)$, if we assume that
$\tilde G=G^{ss}\times \tilde T_0$ is the product of a compact simply connected Lie group $G^{ss}$ by a real vector space $\tilde T_0$, and if we assume that $\pi\subset Z(\tilde G)$ is discrete and satisfies $\mathrm{rk}(\pi\cap \tilde T_0)=\dim(T_0)$,
then the quotient group $G:=\tilde G/\pi$ is compact.
These two constructions are each other's inverses, and establish a bijective correspondence between compact connected Lie groups $G$ and pairs $(\tilde G,\pi)$ as above.

Let us agree that a \emph{real structure} on a complex Lie algebra $\g$ consists of a real vector space $\g_\R\subset \g$ whose complexification is 
$\g$, and such that $i\g_\R$ is closed under the Lie bracket of $\g$.
A real structure is of \emph{compact type} if there exists a compact Lie group whose Lie algebra is $i\g_\R$.
From the above discussion, we get a bijection
\begin{equation}\label{eq: G bijection}
\begin{split}
\big\{G\,\,\big|\,G\,\, &\text{is a compact connected Lie group}\big\}
\\
&\hspace{1.6cm}\updownarrow
\\
\big\{(\g,\,\pi\subset Z(\tilde G)\,\big|\,\,&
\g\,\, \text{is a Lie algebra equipped with a real}
\\
&\text{structure of compact type; $\tilde G$ is the simply}
\\
&\text{connected Lie group associated to $i\g_\R$;}
\\
&\pi\,\,\text{satisfies}\,\,\mathrm{rk}\big(\pi\cap \exp(i\mathfrak z_\R)\big)=\dim(\mathfrak z),
\\
&\text{where}\,\,\mathfrak z=\mathfrak z(\g)\,\, \text{and}\,\,
\mathfrak z_\R=\mathfrak z\cap \g_\R
\big\}
\end{split}
\end{equation}

Given the extra data of a positive level $k\in H^4(BG,\Z)$, then, by Theorem \ref{prop: H^4(BG)}, the complexified Lie algebra $\g$ comes equipped with an invariant symmetric bilinear form $\langle\,\,,\,\rangle_k:\g\otimes\g\to \C$
such that
for every $X\in\g$ with $\exp(2\pi i X)=e$ in $G$, we have $\tfrac12\langle X,X\rangle_k\in \Z$.
The level $k$ is positive if $\tfrac12\langle X,X\rangle_k > 0$ for every non-zero $X$ with $\exp(2\pi i X)=e$.

Let $T^{ss}$ be a maximal torus of $G^{ss}$ and let $\mathfrak h$ be its complexified Lie algebra.
We identify the coroot lattice with the set of elements $X\in \mathfrak h$ such that $\exp(2\pi i X)$ is trivial in $G^{ss}$, and write
\[
\mathfrak h_\R:=\mathfrak h\cap \g_\R=\Lambda_{\text{coroot}}\otimes_\Z\R
\]
for the real span of the coroot lattice (or coweight lattice).
The vector space $\mathfrak h_\R$ can also be described as $i$ times the Lie algebra of $T^{ss}$.

Putting together the formulas (\ref{eq: h_lambda = ...}) and (\ref{eq: min energy Heis}) from the previous section, we obtain:
\begin{corollary}\label{Cor: min energy}
Let $\g^{ss}=\g_1\oplus\ldots\oplus\g_n$ be a semi-simple Lie algebra equipped with a real structure of compact type.
Let $G^{ss}$ be the associated simply connected compact Lie group, and let $k=(k_1,\ldots,k_n)\in H^4_+(BG^{ss},\Z)$ be a level.
Let $\mathfrak h$ be a Cartan subalgebra of $\g^{ss}$, and let 
$\langle\,\,,\,\rangle_k:\mathfrak h\otimes \mathfrak h\to \C$ be the symmetric bilinear form associated to $k$.

Let $\mathfrak z$ be a complex vector space, with 
symmetric bilinear form 
$\langle\,\,,\,\rangle_{\mathfrak z}:\mathfrak z\otimes \mathfrak z\to \C$, and let $\mathfrak z_\R$ be a real form of $\mathfrak z$ on which the bilinear form restricts to a positive definite inner product.

Let $W=L_{\g^{ss},k}\otimes M_{\mathfrak z}$ be the tensor product of the affine and Heisenberg VOAs corresponding to the above data, and let
$M_\lambda$ be an invertible unitary $W$-module,
classified by some $\lambda\in \A^\times\times \mathfrak z_\R\subset \mathfrak h\times \mathfrak z$. Then the minimal energy of $M_\lambda$ is given by
\[
h_\lambda\,=\,\tfrac12\langle\lambda,\lambda\rangle_{k\oplus \mathfrak z},
\]
where $\langle\,\,,\,\rangle_{k\oplus \mathfrak z}$
denotes the symmetric bilinear form on $\mathfrak h\oplus \mathfrak z$ which is the direct sum of the forms $\langle\,\,,\,\rangle_k$ on $\mathfrak h$
and $\langle\,\,,\,\rangle_{\mathfrak z}$ on $\mathfrak z$.
\end{corollary}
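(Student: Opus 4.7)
The statement is a direct assembly of the two minimal-energy formulas from the preceding section, together with the multiplicative behaviour of minimal energies under tensor products of VOA modules. The plan is as follows.

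First I would set up the tensor-product decomposition. Since $\g^{ss} = \g_1 \oplus \ldots \oplus \g_n$, the affine VOA factors as $L_{\g^{ss},k} = L_{\g_1,k_1} \otimes \ldots \otimes L_{\g_n,k_n}$, and therefore $W = L_{\g_1,k_1} \otimes \cdots \otimes L_{\g_n,k_n} \otimes M_{\mathfrak z}$. The index set $\A^\times \times \mathfrak z_\R$ factors correspondingly as $\prod_i \A^\times(\g_i) \times \mathfrak z_\R$, and invertible unitary modules of the tensor-product VOA are external tensor products of invertible unitary modules of the factors. Writing $\lambda = (\mu_1, \ldots, \mu_n, \nu)$ with $\mu_i \in \A^\times(\g_i)$ and $\nu \in \mathfrak z_\R$, this yields
\[
M_\lambda \,\cong\, L_{\g_1,k_1}(\mu_1) \otimes \cdots \otimes L_{\g_n,k_n}(\mu_n) \otimes M_{\mathfrak z}(\nu).
\]

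Next I would use that on a tensor product of positive-energy modules the operator $L_0$ acts as the sum $\sum_i L_0^{(i)} + L_0^{\mathfrak z}$, each summand bounded below on its own factor; hence the minimal $L_0$-eigenvalue of $M_\lambda$ is the sum of the minimal eigenvalues of the factors:
\[
h_\lambda \,=\, \sum_{i=1}^n h_{\mu_i} \,+\, h_\nu.
\]
Applying Proposition~\ref{min en computation} to each simple summand gives $h_{\mu_i} = \tfrac12 \langle \mu_i, \mu_i\rangle_{k_i}$, and the Heisenberg formula \eqref{eq: min energy Heis} gives $h_\nu = \tfrac12 \langle \nu, \nu\rangle_{\mathfrak z}$.

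Finally I would identify the sum as $\tfrac12\langle\lambda,\lambda\rangle_{k\oplus\mathfrak z}$. The only subtle point is that the symmetric bilinear form on $\mathfrak h = \mathfrak h_1\oplus\cdots\oplus\mathfrak h_n$ attached to $k=(k_1,\ldots,k_n)\in H^4_+(BG^{ss},\Z)$ is the orthogonal direct sum of the forms attached to the $k_i$. This is immediate from Theorem~\ref{prop: H^4(BG)}: on the product group the cohomology $H^4(B(G_1\times\cdots\times G_n),\Z)$ is the direct sum of the $H^4(BG_i,\Z)$ (cross terms would come from $H^2(BG_i)\otimes H^2(BG_j)$, which vanish for simply connected simple $G_i$), and the bilinear-form description in Theorem~\ref{prop: H^4(BG)} respects this direct-sum decomposition. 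Combining this with the definition of $\langle\,,\,\rangle_{k\oplus \mathfrak z}$ as the orthogonal direct sum on $\mathfrak h\oplus\mathfrak z$, the three identities above assemble to
\[
h_\lambda \,=\, \sum_{i=1}^n \tfrac12 \langle \mu_i,\mu_i\rangle_{k_i} + \tfrac12 \langle\nu,\nu\rangle_{\mathfrak z} \,=\, \tfrac12 \langle \lambda,\lambda\rangle_{k\oplus\mathfrak z},
\]
as required.

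There is no real obstacle here: the work was done in Proposition~\ref{min en computation} and equation~\eqref{eq: min energy Heis}. The most delicate bookkeeping step is verifying the orthogonality of the summands of $\langle\,,\,\rangle_k$ across different simple factors of $\g^{ss}$, but this is an elementary consequence of the Künneth decomposition of $H^4$ for products of simply connected simple compact Lie groups.
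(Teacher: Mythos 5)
Your proposal is correct and follows the same route as the paper, which simply states the corollary as the immediate combination of Proposition~\ref{min en computation} and equation~\eqref{eq: min energy Heis}; you have merely spelled out the routine bookkeeping (additivity of $L_0$ under tensor products and the orthogonal direct-sum decomposition of $\langle\,\,,\,\rangle_k$ across the simple summands) that the paper leaves implicit.
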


\noindent
We summarize what we have learned so far into the following commutative diagram:
\begin{equation}\label{summary}
\qquad\begin{matrix}
\begin{tikzpicture}
\node (A) at (-1.5,1.5) {$Z(\tilde G)$};
\node (B) at (-1.5,0) {$\tilde G$};
\node (C) at (2.2,1.5) {$\A^\times\times\mathfrak z_\R$};
\node (D) at (2.2,0) {$\mathfrak h_\R\times\mathfrak z_\R$};
\node (E) at (6.5,1.5) {$\Rep^\times_{\text{unitary}}(L_{\g^{ss},k}\otimes M_{\mathfrak z})$};
\node (F) at (6.5,0) {$\R$};
\node (G) at (9.5,0) {$U(1)$};
\draw[right hook->] (A) -- (B);
\draw[->] (C) --
node[above]{$\scriptstyle\exp(2\pi i\cdot)$}
node[below]{$\scriptscriptstyle\cong$} (A);
\draw[right hook->] (C) -- (D);
\draw[>->] (C) -- (E);
\draw[->] (E.south-|F) --node[right]{$\scriptstyle h$} (F);
\draw[->] (D) --node[above]{$\scriptstyle\exp(2\pi i\cdot)$} (B);
\draw[->] (D) --node[above]{$\scriptstyle\frac 1 2 \|\,\cdot\,\|^2_{k\oplus \mathfrak z}$} (F);
\draw[->] (F) --node[above]{$\scriptstyle\exp(2\pi i\cdot)$} (G);
\draw[->] (E.south) ++ (1.3,0) --node[right, yshift=4]{$\scriptstyle\theta$} (G);
\end{tikzpicture}
\end{matrix}
\end{equation}
Here, $h$ is the minimal energy and $\theta$ is the conformal spin.
The map $\A^\times\times\mathfrak z_\R\to Z(\tilde G)$ is an isomorphism by \cite[Thm.\,3]{MR1880320}, and the commutativity of the middle rectangle is the content of Corollary~\ref{Cor: min energy}.

The natural map  $\A^\times\times\mathfrak z_\R \to \Rep^\times_{\text{unitary}}(L_{\g^{ss},k}\otimes M_{\mathfrak z})$ is almost an isomorphism:
it is injective with cokernel isomorphic to $n$ copies of $\Z/2$, where $n$ is the number of occurrences of $(E_8,2)$ in the decomposition of $(\g^{ss},k)$ into simples.\medskip

Recall that a chiral WZW model (Definition~\ref{def: chir WZW}) is a pair consisting of a unitary rational VOA $V$ and a Lie algebra $\g\subset V_1$ such that $V$ is a simple current extension of the sub-VOA $W\subset V$ generated by $\g$.
The VOA $V$ is assumed to be unitary and rational ($W$ does not need to be rational),
and we are not allowed to use the non-trivial invertible $L_{E_8,2}$-module in the construction of the simple current extension.

By \cite[Thm.\,4.10]{Ai+Lin:Unitary-structures-of-VOAs}, $W$ has the following structure.
Since $W$ is generated by $\g\subset W_1$,
it is a quotient of the universal affine VOA $V_{\g,\kappa}$ associated to $\g$ and to some invariant bilinear form $\kappa:\g\otimes\g\to\C$.
Since $W$ is unitary, the Lie algebra $\g$ comes equipped with an invariant positive definite hermitian form which, together with the bilinear form $\kappa$, yields a real structure of compact type on $\g$.
The Lie algebra $\g$ is therefore a direct sum $\g^{ss}\oplus\mathfrak z$ of a semi-simple Lie algebra of compact type and an abelian Lie algebra:
\[
\g=\g^{ss}\oplus\mathfrak z=\g_1\oplus\g_2\oplus\ldots\oplus\g_n\oplus\mathfrak z.
\]
On each simple summand $\g_i$, the bilinear form $\kappa$ is a positive multiple of the basic inner product,
and $W$ is isomorphic to the tensor product 
$L_{\g^{ss},k}\otimes M_\mathfrak z$ of a number of affine VOAs and a Heisenberg VOA.

Let $\A_k$ be the finite set that parametrizes the irreducible representations of $L_{\g^{ss},k}$, and let $\A^\times\subset \A_k$ be the subset of sharp corners of the alcove.
By definition, $V$ is of the form $$V=\pi\ltimes (L_{\g^{ss},k}\otimes M_{\mathfrak z})$$
for some injective homomorphism
$\pi\to \Rep^\times_{\text{unitary}}(L_{\g^{ss},k}\otimes M_{\mathfrak z})$ that 
doesn't hit any of the stuff that involves the non-trivial invertible $L_{E_8,2}$-module.
In other words, the homomorphism factors through the image of $\A^\times\times\mathfrak z_\R$:
\[
\begin{tikzpicture}
\node (a) at (0,1) {$\pi$};
\node (b) at (1.3,0) {$\A^\times\times\mathfrak z_\R$};
\node[right] (c) at (2,1) {$\Rep^\times_{\text{unitary}}(L_{\g^{ss},k}\otimes M_{\mathfrak z})$};
\draw[dashed, ->, shorten <=1, shorten >=1] (a) -- (b);
\draw[->, shorten <=2, shorten >=1] (a) -- (c);
\draw[<-, shorten >=-2, shorten <=-2] (c.south) ++(-1.7,0) -- (b);
\end{tikzpicture}
\]
It remains to examine the condition under which $V$ is rational.
Let $L:= \pi \cap \mathfrak z_\R$.
If $\mathrm{rk}(L)=\dim(\mathfrak z)$, then $V_L := L\ltimes M_{\mathfrak z}$ is a lattice VOA.
Our VOA
is the simple current extension of the rational VOA $L_{\g,k}\otimes V_L$ by the finite group $\pi/L$
\begin{equation}\label{tilde G = ... x ...}
V=\pi\ltimes (L_{\g,k}\otimes M_{\mathfrak z})
=(\pi/L)\ltimes (L_{\g,k}\otimes V_L),
\end{equation}
and is rational by the results in \cite{MR3339173}.
If, on the contrary, $\mathrm{rk}(L)<\dim(\mathfrak z)$,
then $L\ltimes M_{\mathfrak z}$ is the product of a lattice VOA and a Heisenberg VOA, which is not rational.
The tensor product $L_{\g,k}\otimes (L\ltimes M_{\mathfrak z})$ is also not rational, and neither is its finite simple current extension
$V=(\pi/L)\ltimes (L_{\g,k}\otimes (L\ltimes M_{\mathfrak z}))$. All in all, we conclude that
\begin{equation}\label{V rational iff...}
\text{$V$ is rational}\quad \Longleftrightarrow\quad \mathrm{rk}(\pi\cap \mathfrak z_\R)=\dim(\mathfrak z).
\medskip
\end{equation}

With all the above preliminaries in place we are now ready, assuming that Conjecture~\ref{conj: unitary simple current ext} holds, to prove our main theorem:

\begin{proof}[Proof of Theorem~\ref{thm: Classific WZW}]
Consider the following sets:

\[
A_1\,:=\,\big\{(G,k)\,\big|\, G\,\, \text{is a compact connected Lie group;}\,\,k\in H^4_+(BG,\Z)\big\}
\]

\[
\begin{split}
A_2:=\big\{(G,\langle\,\,,\,\rangle:\g\otimes\g\to\C)\,\big|\,
&G\,\,\text{is a compact connected Lie group;}
\\
&\g\,\,\text{is its complexified Lie algebra};
\\
&\langle\,\,,\,\rangle\,\, \text{is $G$-invariant, positive definite on}\,\, i\,\mathrm{Lie}(G);
\\
&\tfrac12\langle X,X\rangle \in \Z\,\,\,\, \forall X\; \text{s.t.} \exp(2\pi i X)=e \text{ in } G\big\}
\end{split}
\]

\[
\begin{split}
A_3:=\big\{(\g,\pi\subset Z(\tilde G),\langle\,\,,\,\rangle:\g\otimes\g\to\C)\,\big|\,&
\g\,\, \text{is a Lie algebra equipped with a real}
\\
&\text{structure of compact type; $\tilde G$ is the simply}\quad
\\
&\text{connected Lie group associated to $i\g_\R$;}
\\
&\langle\,\,,\,\rangle\,\, \text{is $\g$-invariant, positive definite on}\,\, \g_\R;
\\
&\tfrac12\langle X,X\rangle \in \Z\,\,\,\, \forall X\; \text{s.t.} \exp(2\pi i X)\in \pi;
\\
&\mathrm{rk}\big(\pi\cap \exp(\mathfrak z(i\g_\R))\big)=\dim(Z(G))\big\}
\end{split}
\]

\[
\begin{split}
\quad A_4:=\big\{(\g=\g^{ss}\oplus\mathfrak z,\,\pi\subset\A^\times\times \mathfrak z_\R,
\,\big|\,&
\g\,\, \text{is a Lie algebra equipped with a real}
\\
\langle\,\,,\,\rangle:\g\otimes\g\to\C)\,\,\,\,
&\text{structure of compact type, written as}
\\
&\text{the sum of a semi-simple Lie algebra $\g^{ss}$}
\\
&\text{and an abelian Lie algebra $\mathfrak z$; $\A^\times$ is the set}
\\
&\text{of sharp corners of the Weyl alcove of $\mathfrak g^{ss}$;}
\\
&\langle\,\,,\,\rangle\,\, \text{is $\g$-invariant, positive definite on}\,\, \g_\R;
\\
&\text{the restriction of $\langle\,\,,\,\rangle$ to each simple}
\\
&\text{summand of $\g^{ss}$ is a positive integer}
\\
&\text{multiple of the basic inner product;}
\\
&\tfrac12\langle \lambda,\lambda\rangle \in \Z\,\,\,\, \forall \lambda\in \pi;
\\
&\mathrm{rk}(\pi\cap \mathfrak z_\R)=\dim(\mathfrak z_\R)\big\}
\end{split}
\]

\[
\begin{split}
A_5:=\big\{(\g=\g^{ss}\oplus\mathfrak z=\g_1\oplus\ldots\oplus\g_n\oplus\mathfrak z,\,
\big|\,&
\g_i\,\, \text{are simple Lie algebras, $\mathfrak z$ is an}
\\
k=(k_1,\ldots,k_n),\,\,\,\,\,&
\text{abelian Lie algebra, all with real}
\\
\langle\,\,,\,\rangle_{\mathfrak z}:\mathfrak z\otimes\mathfrak z\to\C,\,\,\,\,\,&
\text{structures of compact type;}\,\,k_i\in\mathbb N;
\\
\pi\subset \Rep^\times_{\text{unitary}}(L_{\g^{ss},k}\otimes M_{\mathfrak z}))\,\,\,\,&
\langle\,\,,\,\rangle_{\mathfrak z}\,\, \text{is positive definite on}\,\, \mathfrak z_\R;
\\
&L_{\g^{ss},k}\,\,\text{and}\,\,M_{\mathfrak z}\,\,\text{are the associated}
\\
&\text{affine and Heisenberg unitary VOAs;}
\\
&\pi\,\,\text{does not hit stuff involving the}
\\
&\text{non-trivial invertible $L_{E_8,2}$-module;}
\\
&\theta_\lambda=1\;\;\forall\lambda\in\pi;
\\
&\mathrm{rk}(\pi\cap \mathfrak z_\R)=\dim(\mathfrak z)\big\}
\end{split}
\]

\[
\begin{split}
A_6:=\big\{
(V,\g)\,\big|\,& V\,\,\text{is a rational unitary VOA;}
\\
&\g\,\,\text{is a subalgebra of the Lie algebra}\,\, V_1\subset V; 
\\
&V\,\text{is a simple current extension of the sub-VOA generated by}\, \g, 
\\
&\text{and the simple current extension is not ``$E_{8,2}$-contaminated''}
\big\}
\end{split}\medskip
\]

We will construct a sequence of bijections 
$$A_1\,\cong\, A_2\,\cong\, A_3\,\cong\, A_4\,\cong\, A_5\,\cong\, A_6.$$
The first bijection $A_1\cong A_2$ is the content of Theorem \ref{prop: H^4(BG)}.

The second bijection $A_2\cong A_3$ follows from (\ref{eq: G bijection}),
and from the observation that the condition $\exp(2\pi iX)=e$ in $G$ is equivalent to
the condition $\exp(2\pi iX)\in \pi$ in $\tilde G$.

The bijection $A_3\cong A_4$ uses the 
canonical isomorphism
$Z(\tilde G)\cong \A_k^\times\times\mathfrak z_\R$
from the top left of (\ref{summary}).
To go from $A_3$ to $A_4$, we need to check that
the restriction of $\langle\,\,,\,\rangle$ to any simple
summand $\g_i\subset\g^{ss}$ is a positive multiple
of the basic inner product on $\g_i$.
Let $\tilde G_i$ be the simply connected compact Lie group associated to $\g_i$.
By definition, the basic inner product on $\g_i$ is the one that assigns square norm $2$ to all the short coroots $\alpha_{\text{short}}\in \g_i$.
The latter satisfy
$\exp(2\pi i\alpha_{\text{short}})=e$ in $\tilde G_i$.
In particular, they satisfy $\exp(2\pi i\alpha_{\text{short}})\in\pi$.
By assumption, the number $k_i:=\tfrac12 \|\alpha_{\text{short}}\|^2$ is a positive integer, and so
we have $\langle\,\,,\,\rangle|_{\g_i\otimes \g_i}=k_i\cdot\langle\,\,,\,\rangle_{\text{basic}}$, as desired.
To
go
from $A_4$ to $A_3$, let 
$\pi\subset \A^\times\times\mathfrak z_\R$ be a subgroup, and let $\exp(2\pi i\cdot\pi)$
denote its isomorphic image in $Z(\tilde G)$.
We need to show that if $X\in\g$ satisfies $\exp(2\pi i X)\in \exp(2\pi i\cdot\pi)$, then its square norm is even.
Given such an $X$, then, by the last equality in (\ref{eq: From Sawin}), 
we can find an element $\alpha\in\Lambda_{\text{coroot}}$ such that $X+\alpha\in\pi$.
We need to show that the quantity
\[
\tfrac12\|X\|^2\,=\,\tfrac12\|X+\alpha\|^2-\tfrac12\|\alpha\|^2-\langle X,\alpha\rangle
\]
is an integer.
The first term $\tfrac12\|X+\alpha\|^2$ is an integer by assumption.
The second term $\tfrac12\|\alpha\|^2$ is in $\Z$ because $\alpha\in\Lambda_{\text{coroot}}$ and each $\langle\,\,,\,\rangle|_{\g_i\otimes \g_i}$ is an integer multiple of the basic inner product.
Finally, the last term $\langle X,\alpha\rangle$ is an integer by Lemma~\ref{coroot coweight} below.

The bijection $A_4\cong A_5$
follows from the isomorphism 
$$\A^\times\times\mathfrak z_\R \stackrel{\scriptscriptstyle \cong}\longrightarrow \Rep^\times_{\text{unitary}}(L_{\g^{ss},k}\otimes M_{\mathfrak z})\setminus\{\text{stuff involving $E_8$ level $2$}\}$$
and from the formula
$\theta_\lambda \,=\, \exp(2\pi i\cdot\tfrac12 \langle \lambda,\lambda\rangle_{k\oplus \mathfrak z})$
provided by Corollary \ref{Cor: min energy}.

Finally, the bijection $A_5\cong A_6$
follows from the equivalence $(\text{$V$ is rational}) \Leftrightarrow \mathrm{rk}(\pi\cap \mathfrak z_\R)=\dim(\mathfrak z)$ proved in (\ref{V rational iff...}), from the discussion contained in the three paragraphs preceding that equation,
and from Proposition~\ref{prop/conj}, which says that a simple current extension can be performed if and only if the conformal spins $\theta_\lambda$ are trivial.
It is for this last step that the result of Conjecture~\ref{conj: unitary simple current ext} is needed.
\end{proof}

\begin{lemma}\label{coroot coweight}
Let $\g$ be a simple Lie algebra, and let $\langle\,\,,\,\rangle$ be its basic inner product.
Then $\langle \Lambda_{\text{\rm coweight}},\Lambda_{\text{\rm coroot}}\rangle\subset \Z$.
\end{lemma}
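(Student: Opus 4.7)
The plan is to reduce the statement to integrality on lattice generators, and then to rewrite the basic inner product pairing in terms of the tautological evaluation pairing $\mathfrak h^* \otimes \mathfrak h \to \C$, under which coweights pair integrally with roots essentially by definition. By bilinearity of $\langle\,\,,\,\rangle$ on each argument, it is enough to verify that $\langle \omega, \alpha^\vee\rangle \in \Z$ when $\omega$ is a fundamental coweight and $\alpha^\vee$ is a simple coroot.

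The key step is to use the isomorphism $\mathfrak h \to \mathfrak h^*$, $x \mapsto \langle x, \cdot\rangle$ induced by the basic inner product (normalized so that short coroots have squared length $2$, or equivalently so that long roots in $\mathfrak h^*$ have squared length $2$). Under this isomorphism, the element $t_\alpha \in \mathfrak h$ corresponding to a root $\alpha$ satisfies $\langle x, t_\alpha\rangle = \alpha(x)$ for every $x \in \mathfrak h$, and the coroot is $\alpha^\vee = (2/(\alpha,\alpha))\, t_\alpha = r_\alpha \, t_\alpha$, where $r_\alpha = 1$ if $\alpha$ is long and $r_\alpha \in \{2,3\}$ is the squared root-length ratio if $\alpha$ is short. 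So, under the basic inner product, the coroot $\alpha^\vee$ corresponds to $r_\alpha \cdot \alpha \in \mathfrak h^*$ with $r_\alpha \in \{1,2,3\}$.

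Combining these two observations, I would compute
\[
\langle \omega, \alpha^\vee\rangle \,=\, r_\alpha \langle \omega, t_\alpha\rangle \,=\, r_\alpha \cdot \alpha(\omega).
\]
Since $\omega$ is a coweight, $\alpha(\omega) \in \Z$ by the very definition of the coweight lattice, and $r_\alpha \in \{1,2,3\} \subset \Z$, so the product lies in $\Z$. The only genuine ``obstacle'' here is bookkeeping of the three interlocking conventions (basic inner product on $\mathfrak h$ versus on $\mathfrak h^*$, long versus short roots, roots versus coroots); once these are aligned correctly, no case-by-case analysis over simple types is needed, and the result drops out of the identity $\langle \omega, \alpha^\vee\rangle = r_\alpha \cdot \alpha(\omega)$.
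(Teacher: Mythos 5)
Your proof is correct and rests on the same key fact as the paper's: under the identification $\mathfrak h\cong\mathfrak h^*$ given by the basic inner product, a coroot $\alpha^\vee$ corresponds to $r_\alpha\cdot\alpha$ with $r_\alpha=2/(\alpha,\alpha)\in\{1,2,3\}$. The paper packages this as the lattice containment $\Lambda_{\text{coroot}}\subset\Lambda_{\text{root}}$, dualizes to get $\Lambda_{\text{coweight}}\subset\Lambda_{\text{weight}}$, and then invokes weight--coroot integrality, whereas you compute $\langle\omega,\alpha^\vee\rangle=r_\alpha\,\alpha(\omega)$ directly from the definition of the coweight lattice; this is only a cosmetic reorganization of the same argument.
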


\begin{proof}
We identify $\g$ with its dual $\g^*$ by means of the basic inner product.
Let $m\in\{1,2,3\}$ be the ratio between the square norm of the long roots and that of the short roots.
The coroot lattice $\Lambda_{\text{coroot}}$ is spanned by the long roots and by $m$ times the short roots.
In particular, $\Lambda_{\text{coroot}}\subset \Lambda_{\text{root}}$.
By dualizing, it follows that 
$\Lambda_{\text{coweight}}\subset \Lambda_{\text{weight}}$.
The lattices $\Lambda_{\text{weight}}$ and $\Lambda_{\text{coroot}}$ are dual to each other.
In particular, the pairing of a weight and a coroot is always an integer.
It follows that $\langle \Lambda_{\text{coweight}},\Lambda_{\text{coroot}}\rangle
\subset
\langle\Lambda_{\text{weight}},\Lambda_{\text{coroot}}\rangle = \Z$.
\end{proof}\medskip

To finish our analysis,
we compare the sets $A_{1,\ldots,6}$ from the above proof with the set of general chiral WZW models given in Definition \ref{def: g-chir WZW}:
\[
\begin{split}
A_7:=\big\{\;\!
V\,\;\!\big|\,\,& V\,\,\text{is a rational unitary VOA which is a simple current}
\\
&\text{extension of a sub-VOA that is generated in degree one}
\big\}
\end{split}\smallskip
\]

The composite
\begin{equation}\label{A7}
A_1\,\cong\, A_2\,\cong\, A_3\,\cong\, A_4\,\cong\, A_5\,\cong\, A_6\,\to\, A_7
\end{equation}
is neither surjective, nor injective.

As explained in the introduction,
the VOA $V$ associated to a simple simply laced gauge group at level $1$ is isomorphic to the one associated to its maximal torus \cite{MR0626704}. This shows that (\ref{A7}) is not injective.
Finally, the simple current extension $$\Z_2\ltimes (L_{E_8,2}\otimes L_{E_8,2})$$
of $L_{E_8,2}\otimes L_{E_8,2}$ by its invertible module $M_{E_8,2}\otimes M_{E_8,2}$ does not correspond to any Lie group.
The gauge group wants to be $(E_8\times E_8)/\Z_2$, but that a quotient does not make sense as the center of $E_8\times E_8$ is trivial.
This shows that (\ref{A7}) is not surjective.
We have used two copies of $E_8$ because $M_{E_8,2}$ has conformal spin $-1$, and we can only perform simple current extensions with invertible modules of conformal spin $1$.
\medskip

Given a compact connected Lie group $G$ with complexified Lie algebra $\mathfrak g$, and given a level $k\in H^4_+(BG,\mathbb Z)$, we write $V_{G,k}$ for the corresponding VOA, under the map \eqref{A7}.
It is a simple current extension
\begin{equation*}
V_{G,k}
=\pi\ltimes (L_{\g^{ss},k}\otimes M_{\mathfrak z})
\end{equation*} 
of the tensor product of an affine VOA (associated to the semi-simple part of $\g$) and a Heisenberg VOA (associated to the center of $\mathfrak g$) by the abelian group $\pi:=\pi_1(G)$.
Alternatively, it is a simple current extension
\begin{equation}\label{sskjvjsv}
V_{G,k}
=(\pi/L)\ltimes (L_{\g^{ss},k}\otimes V_L)
\end{equation} 
of the tensor product of an affine VOA and a lattice VOA by the finite abelian group $\pi/L$ (here, $L=\pi\cap \mathfrak z_{\mathbb R}$ is as in \eqref{tilde G = ... x ...}, where the intersection takes place inside $\A^\times\times\mathfrak z_\R$).
We note that the construction of $V_{G,k}$ does not rely on Conjecture~\ref{conj: unitary simple current ext}; only its unitarity does.

\section{WZW conformal nets}

In this section, we construct a chiral conformal net $\mathcal A_{G,k}$
for every compact connected Lie group $G$ and positive level $k\in H^4_+(BG,\mathbb Z)$.
We call these the \emph{chiral WZW conformal nets}.
These conformal nets appear at the starting point of our construction \cite[\S5]{arXiv:1503.06254} of the value of Chern-Simons theory on a point.
They were also briefly mentioned in \cite[Ex.\,5.14]{arXiv:1509.02509}.

In the remarkable paper \cite{arXiv:1503.01260}, a bijective correspondence was established between a certain class of unitary VOA (the so-called `strongly local' unitary VOAs) and a certain class of conformal nets.
It is natural to ask whether the VOAs $V_{G,k}$ constructed in \eqref{sskjvjsv} fall within the domain of applicability of that correspondence, and 
what the corresponding conformal nets are.
Unfortunately, even when $G$ is a torus, it is presently not known whether all lattice VOAs are strongly local \cite[Conj.\,8.17]{arXiv:1503.01260}.
Another problem for the strong locality of the chiral WZW VOAs is their unitarity.
We will therefore contend with pursuing a more modest goal:
construct a chiral conformal net $\mathcal A_{G,k}$
for every compact connected Lie group $G$ and positive level $k\in H^4_+(BG,\mathbb Z)$.
We conjecture that they correspond to the VOAs $V_{G,k}$ under the correspondence established in \cite{arXiv:1503.01260}.

Our strategy for defining $\mathcal A_{G,k}$ is to mimic the formula~\eqref{sskjvjsv}.
Let $\pi=\pi_1(G)$ be the fundamental group of $G$.
Decompose the complexified Lie algebra as a direct sum
\[
\g=\g^{ss}\oplus\mathfrak z=\g_1\oplus\ldots\oplus\g_n\oplus\mathfrak z
\]
of a semi-simple Lie algebra $\g^{ss}=\g_1\oplus\ldots\oplus\g_n$ and an abelian Lie algebra $\mathfrak z$.
The level $k\in H^4_+(BG,\mathbb Z)$ induces levels $k_i\in\N$ for every simple Lie algebra $\g_i$.
Let $G_i$ be the compact simply connected Lie group corresponding to $\g_i$,
and let $\cA_{G_i,k_i}$ be the associated loop group conformal net \cite{MR1231644,Toledano(PhD-thesis), MR1645078}.

Let $L=\pi\cap \mathfrak z_{\mathbb R}$ be as in \eqref{sskjvjsv}.
As explained in Section \ref{sec : H^4}, $k$ induces a metric on $\mathfrak z_\R$ which endows $L$ with the structure of an even integral lattice.
Let $\cA_L$ be the lattice conformal net associated to $L$, equivalently, the loop group conformal net associated to the torus $T_0:=L\otimes_{\mathbb Z} U(1)$ \cite{MR2261756, Staszkiewicz-thesis}.

The tensor product
\begin{equation}\label{tens prod conf net}
\cA_{G_1,k_1}\otimes \ldots \otimes \cA_{G_n,k_n} \otimes \cA_L
\end{equation}
is a chiral WZW conformal net for the gauge group $\tilde G:=G_1\times \ldots\times G_n\times T_0$.
It is generated by the centrally extended loop group of $\tilde G$, acting of its vacuum repre\-sentation.
Let $Z:=\pi/L$.
\begin{definition}\label{def: slfk ldnb}
The chiral WZW conformal net associated to the Lie group $G\cong \tilde G/Z$ and the level $k\in H^4_+(BG,\mathbb Z)$ is the simple current extension
\begin{equation} \label{tens prod conf net -- SimplCurrExt}
\cA_{G,k}:=Z\ltimes(\cA_{G_1,k_1}\otimes \ldots \otimes \cA_{G_n,k_n} \otimes \cA_L)
\end{equation}
of the conformal net \eqref{tens prod conf net} by the finite abelian group $Z$.
\end{definition}
Here, an extension of conformal nets $\cA\subset \cB$ \cite{MR1332979} is called a \emph{simple current extension} if the vacuum sector of $\cB$ decomposes as a direct sum $\bigoplus_{\lambda\in Z}H_\lambda$ of invertible $\cA$-modules, and the conformal net structure of $\cB$ is compatible with the $Z$-grading.

Definition \ref{def: slfk ldnb} requires some explanation which we now provide.
First of all, it is expected but presently not known\footnote{See \cite[\S15]{Wass_unpub_1990} for some results in that direction.} that the irreducible representations of the conformal net 
\eqref{tens prod conf net}
are classified by the set $\A_{k_1}(\g_1)\times\ldots\times \A_{k_n}(\g_n)\times L^*/L$, with $\A_{k_i}(\g_i)$ as in \eqref{slfjvksdjfvkja vk}.
In the special case of a weight $\lambda$ in the subset
\begin{equation}\label{tens prod conf net -- invertible weights}
\A_{k_1}^\times(\g_1)\times\ldots\times \A^\times_{k_n}(\g_n)\times L^*/L,
\end{equation}
we can construct the associated representation $H_\lambda$ of \eqref{tens prod conf net} `by hand',
and prove that it is invertible with respect to the fusion product.
This will be sufficient for our purposes. %constructing the simple current extension \eqref{tens prod conf net -- SimplCurrExt}.

The construction of $H_\lambda$ goes as follows.
Identify the set \eqref{tens prod conf net -- invertible weights} with a subset of the center of $\tilde G$ in the obvious way.
Following \cite[\S3.4]{MR900587}\cite[Def.\,3.12]{MR2261756}, for every $\lambda$ in the above subset,
there is an outer automorphism $a_\lambda$ of the centrally extended loop group of $\tilde G$ given by `conjugation' by a path from the identity to $\lambda \in Z(\tilde G)$.
This extends to a (localisable) automorphism of the associated conformal net.
The representation $H_\lambda$ is obtained by twisting the vacuum representation $H_0$ by that automorphism.
Fusing with $H_\lambda$ sends a representation to the representation with same underlying Hilbert space and action precomposed by $a_\lambda$ \cite[Sec.\,IV.2]{MR1231644}
(this is visibly an invertible operation).

The following is the main result of this section:

\begin{theorem}
The simple current extension~\eqref{tens prod conf net -- SimplCurrExt} used to define the chiral WZW conformal net $\cA_{G,k}$ exists and is unique.
\end{theorem}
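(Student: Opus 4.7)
The plan is to parallel the VOA argument of Proposition~\ref{prop/conj} in the conformal net setting, using the Longo--Rehren $Q$-system framework for extensions of conformal nets \cite{MR1332979} and the fact that pointed braided subcategories are determined by their conformal spins.

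\textbf{Step 1 (Realizing $Z$ as invertible modules of trivial spin).} Via the bijection of Section~7, the finite group $Z = \pi/L$ is identified with a subgroup of
\[
\A^\times_{k_1}(\g_1)\times\ldots\times\A^\times_{k_n}(\g_n)\times L^*/L,
\]
i.e.\ of the set \eqref{tens prod conf net -- invertible weights}. For each $\lambda \in Z$, the representation $H_\lambda$ of the tensor product net \eqref{tens prod conf net} is constructed as the twist of the vacuum by the localizable outer automorphism $a_\lambda$, and is invertible with respect to fusion. The conformal spin $\theta_\lambda$ is $\exp(2\pi i h_\lambda)$ with $h_\lambda = \tfrac12\langle\lambda,\lambda\rangle_{k\oplus\mathfrak z}$ (Corollary~\ref{Cor: min energy}). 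Any lift of $\lambda$ to $\mathfrak h\oplus\mathfrak z$ satisfies $\exp(2\pi i\lambda)\in\pi$, so the integrality clause of Theorem~\ref{prop: H^4(BG)} (applied to $k\in H^4_+(BG,\Z)$) yields $h_\lambda\in\Z$, hence $\theta_\lambda=1$.

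\textbf{Step 2 (Existence via a commutative $Q$-system).} Let $\cC_Z$ be the full subcategory of representations of \eqref{tens prod conf net} spanned by the $H_\lambda$, $\lambda\in Z$. It is pointed, with fusion rules given by the group law of $Z$. Since a pointed braided tensor category is determined by its self-braidings \cite[Prop 2.14]{MR2076134}, and these equal the (now trivial) conformal spins, we obtain a braided equivalence $\cC_Z\simeq\Vect[Z]$ with trivial braiding. The canonical object $\bigoplus_{\lambda\in Z}\C_\lambda$ in $\Vect[Z]$ carries a unique commutative algebra structure, and transporting this along the equivalence endows $\bigoplus_{\lambda\in Z}H_\lambda$ with a unique commutative $Q$-system structure over the tensor product net. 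The Longo--Rehren construction \cite{MR1332979} then produces an extension $\cA_{G,k}$, which by construction decomposes as $\bigoplus_{\lambda\in Z}H_\lambda$ compatibly with the $Z$-grading; this is the desired simple current extension (compare the similar argument in \cite[Lem.\,2.1]{MR2263720}).

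\textbf{Step 3 (Uniqueness).} Conversely, any simple current extension $\cA\subset\cB$ with $\cB=\bigoplus_{\lambda\in Z}H_\lambda$ is, by the general correspondence between extensions of conformal nets and commutative $Q$-systems, determined by a commutative $Q$-system structure on this object inside $\cC_Z$. Since such a structure is unique by Step~2, the extension is unique up to isomorphism.

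\textbf{Main obstacle.} The technical point I expect to be most delicate is justifying that the $Q$-system / extension machinery, usually stated for completely rational nets, applies to the tensor product \eqref{tens prod conf net}, which contains the (non-completely-rational) toral factor $\cA_L$. This is handled by observing that we never leave the pointed subcategory $\cC_Z$ generated by the finitely many invertible modules $H_\lambda$, which is itself modular (in fact a finite pointed category), so all constructions take place in a finite-dimensional setting where the classical Longo--Rehren formalism applies without modification.
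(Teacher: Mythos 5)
Your argument is correct and follows essentially the same route as the paper: establish that the $H_\lambda$ have trivial conformal spin, then run the pointed braided category / commutative $Q$-system argument via \cite{MR2076134} and \cite{MR1332979} (which the paper isolates as Proposition~\ref{slgbsljg gb}, proved exactly as in your Steps 2--3, including the remark about uniqueness of the algebra structure). The one small difference is in Step 1: the paper deduces $\theta_{H_\lambda}=1$ from the triviality of the conformal spins of the VOA modules $M_\lambda$ --- which is already forced by the existence of the VOA extension \eqref{sskjvjsv} --- together with the explicit observation that the two twisting constructions coincide and hence have identical $L_0$-spectra, an identification you implicitly rely on as well when you apply the VOA-level Corollary~\ref{Cor: min energy} to the net representations $H_\lambda$.
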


\begin{proof}
Let $H_\lambda$ be the conformal net representations which enter in the definition \eqref{tens prod conf net -- SimplCurrExt} of $\cA_{G,k}$,
and let $M_\lambda$ be the ($L_{\g^{ss},k}\otimes V_L$)-representations which enter in the definition \eqref{sskjvjsv} of $V_{G,k}$.
The twisting construction \cite[\S3.4]{MR900587}\cite[Def.\,3.12]{MR2261756} of $H_\lambda$ is identical to the twisting construction \cite[\S3]{MR1475118} of $M_\lambda$.
In particular, the $L_0$-eigenspaces of $H_\lambda$ are canonically isomorphic to those of $M_\lambda$ ($H_\lambda$ is the Hilbert space completion of $M_\lambda$).
The modules $M_\lambda$ have trivial conformal spin (otherwise \eqref{sskjvjsv} would not exist).
The same property therefore holds for the~$H_\lambda$.
Proposition~\ref{slgbsljg gb} then guarantees the existence and uniqueness of the simple current extension~\eqref{tens prod conf net -- SimplCurrExt}.
\end{proof}

\begin{proposition}[{\cite[Lem.\,2.1]{MR2263720}}]\label{slgbsljg gb}
Let $\cA$ be a conformal net and let $Z\subset\mathrm{Rep}^\times(\cA)$ be a finite subgroup of its group of invertible representations.

Then the simple current extension $Z\ltimes \cA$ exists if and only if $H_\lambda$ has trivial conformal spin for every $\lambda\in Z$.
In that case, the conformal net $Z\ltimes \cA$ is unique up to isomorphism.
\end{proposition}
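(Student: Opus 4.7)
The plan is to mirror the proof of Proposition~\ref{prop/conj} in the operator algebraic setting, replacing the VOA extension/algebra object correspondence of \cite{MR3339173} by the Longo--Rehren correspondence \cite{MR1332979} between local (i.e., properly extending the conformal structure) extensions of a conformal net $\cA$ and commutative $Q$-systems in its representation category $\mathrm{Rep}(\cA)$. The ``only if'' direction is immediate: if $\cB = Z\ltimes \cA$ exists, then each $H_\lambda$ embeds as a direct summand of the vacuum sector of $\cB$, which has integer $L_0$-spectrum, so $\theta_\lambda=e^{2\pi i L_0}|_{H_\lambda}=1$.

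For ``if'', let $\mathcal{C}_Z\subset\mathrm{Rep}(\cA)$ be the full subcategory on objects isomorphic to direct sums of elements of $Z$. This is a braided tensor subcategory, finite since $Z$ is finite, and pointed in the sense that every simple object is invertible with fusion group $Z$. The key input is the conformal spin--statistics theorem \cite{MR1231644}: for an invertible DHR sector the self-braiding $c_{H_\lambda,H_\lambda}$ coincides with the conformal spin $\theta_\lambda$. By assumption these are all trivial, so the quadratic form $q\colon Z\to U(1)$ classifying $\mathcal{C}_Z$ as a pointed braided fusion category vanishes. Invoking the classification of pointed braided fusion categories in terms of abelian $3$-cocycles \cite[Prop.\,2.14]{MR2076134}, we conclude that
\[
\mathcal{C}_Z \,\simeq\, \mathrm{Vec}[Z]
\]
as braided tensor categories, with trivial braiding on the right.

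Under this equivalence, the object $\bigoplus_{\lambda\in Z}\mathbb{C}_\lambda\in\mathrm{Vec}[Z]$ carries a unique commutative algebra structure (the group algebra of $Z$), which transports to a unique commutative $Q$-system structure on $A:=\bigoplus_{\lambda\in Z}H_\lambda$ in $\mathcal{C}_Z$. The Longo--Rehren correspondence \cite{MR1332979} then produces a local extension $\cA\subset\cB$ whose vacuum sector restricted to $\cA$ is $A$, and the $Z$-grading on $A$ exhibits $\cB$ as the desired simple current extension. Uniqueness of $\cB$ follows from the uniqueness of the commutative algebra structure together with the fact that the Longo--Rehren correspondence is an equivalence.

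The main technical obstacle is the spin--statistics identification $c_{H_\lambda,H_\lambda}=\theta_\lambda$ for invertible sectors and the resulting trivialization of the abelian $3$-cocycle; once this is in hand, the rest of the argument is formal. One should also verify that the commutative $Q$-system produced is \emph{haploid} (i.e., has one-dimensional space of units), which is automatic here since the trivial summand $H_0$ of $A$ is the vacuum sector of $\cA$.
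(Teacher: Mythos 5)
Your proposal is correct and follows essentially the same route as the paper's own proof: the same subcategory $\mathcal{C}_Z$, the same appeal to the conformal spin--statistics theorem together with \cite[Prop.\,2.14]{MR2076134} to trivialize the braiding and identify $\mathcal{C}_Z\simeq\Vect[Z]$, the same unique commutative algebra (group algebra) structure, and the same invocation of \cite{MR1332979} for existence and uniqueness. Your explicit treatment of the ``only if'' direction and the haploid check are minor elaborations of points the paper leaves implicit.
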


\begin{proof}
The proof of \cite[Lem.\,2.1]{MR2263720} only shows existence. % of the simple current extension.
We present an alternative argument, along the same lines of our proof of Proposition~\ref{prop/conj}, which also addresses uniqueness.

Consider the subcategory of $\Rep(\cA)$ spanned by the objects $H_\lambda$, $\lambda \in Z$.
By \cite[Prop 2.14]{MR2076134}
and the conformal spin-statistics theorem \cite{MR1410566},
that subcategory is braided equivalent to the category $\Vect[Z]$ of $Z$-graded vector spaces (with trivial associator and trivial braiding).
The `group algebra' $\bigoplus_{\lambda\in Z} H_\lambda$ is a commutative unitary Frobenius algebra object in an evident way.
By applying the main result of \cite{MR1332979} to the latter, we get the desired simple current extension of $\cA$.
The extension is unique because the algebra structure on $\bigoplus_{\lambda\in Z} H_\lambda$ is unique.
\end{proof}

\section{Conclusion}

We have proposed a novel definition of chiral WZW models, as simple current extensions of the tensor product of affine and Heisenberg VOAs.
We have shown that, by fine-tuning the definition  (Definition~\ref{def: chir WZW}),
we can ensure that there is a bijective correspondence between chiral WZW models, and pairs $(G,k)$ consisting of a compact connected Lie group $G$ and a level $k\in H^4_+(BG,\Z)$.
The fine-tuning involves (1) remembering the sub-VOA that was used to perform the simple current extension, and (2) disallowing the usage of the non-trivial invertible module of $E_8$ level 2.

In the absence of the fine-tuning, i.e., if one doesn't remember the sub-VOA that was used to perform the simple current extension and if one is allowed to use the non-trivial invertible module of $E_8$ level 2,
then the natural map from pairs $(G,k)$ to chiral WZW models is neither injective nor surjective.
The correspondence
\[
\left.\left\{(G,k)\left|\,\,\parbox{5.5cm}{
$G$: compact connected Lie group\\
$\,\,k\in H^4_+(BG,\Z)$}\right\}
\,\,\,\stackrel{\cong}\longrightarrow\,\,\,\,
\right\{\text{Chiral WZW models}\right\}
\]
that we have established in this paper is therefore fragile:
depending on the exact definition of a chiral WZW model, it either is or isn't a bijection.
However, in all cases, the map is \emph{close} to being a bijection.\medskip

It is interesting to note that, unlike Chern--Simons theories whose gauge group can be disconnected (see \cite{MR1048699}), the gauge group of a chiral WZW model is \emph{necessarily connected}.
This raises the question of what is the chiral WZW model associated to a disconnected gauge group?
Whatever the answer turns out to be, mathematicians will probably need to enlarge the class of objects that they agree to call `chiral conformal field theories' in order to accommodate these yet-to-be-defined models.

%\bibliographystyle{abbrv}        \bibliography{CS(pt)}        \end{document}
%======================================================

\bibliographystyle{abbrv}
\bibliography{WZW}

\end{document}